\documentclass{article}

\usepackage[a4paper,margin=3cm]{geometry}

\usepackage[ngerman,english]{babel}
\usepackage[utf8]{inputenc}
\usepackage{setspace}
\usepackage{xcolor}
\usepackage{amsmath,amsthm,amssymb,amsfonts}
\usepackage{graphicx}
\usepackage{psfrag}
\usepackage{physics}
\usepackage{changebar}
\usepackage{comment}
\usepackage{listings}
\usepackage{enumitem}
\usepackage{mathtools}
\usepackage{mathrsfs}
\usepackage[scr=boondox,cal=esstix]{mathalpha}
\usepackage{algorithm2e}
\usepackage{multirow}
\usepackage[sort&compress,comma,square,numbers]{natbib}
\usepackage[bf]{caption}
\RestyleAlgo{ruled}

\newtheorem{Def}{Definition}[section]

\newtheorem{Lemma}[Def]{Lemma}
\newtheorem{Prop}[Def]{Proposition}

\newtheorem{Kor}[Def]{Corollary}

\newtheorem{Theo}[Def]{Theorem}

\usepackage{hyperref}

\usepackage{LesiMacros}

\title{\sffamily\bfseries Characterisation of reactive Nash equilibria in repeated additive games}

\date{}

\author{
\parbox[c]{14cm}{
\centering
\onehalfspacing
\fontsize{12}{12}\selectfont
Franziska Lesigang$^{1}$, Christian Hilbe$^{1}$, Nikoleta E. Glynatsi$^{2, 3}$\\[0.4cm]
$^1$Interdisciplinary Transformation University, Linz, Austria.\\
$^2$RIKEN Center for Computational Science, Kobe, Japan.\\
$^3$RIKEN Center for Interdisciplinary Theoretical and Mathematical Science (iTHEMS), Wako, Japan.\\[0.15cm]
}
}

\begin{document}

\maketitle

\section*{Abstract}
In this paper, we study reactive strategies in repeated additive games between
two players with finitely many actions. Reactive strategies condition only on
the opponent's previous action, making them one of the simplest ways players
can respond to past interactions. Additive games include important models of
cooperation, such as the donation game and games with a punishment option.
We show that, for this class of games and strategies, the conditions for
symmetric Nash equilibria reduce to a system of linear equalities and
inequalities in the strategy parameters, allowing us to characterise all such
equilibria. We establish a one-to-one correspondence between non-empty subsets
$S$ of the action set and equilibrium classes, which we call
$S$-supporting equilibria. These are equilibria that use exactly the actions in
$S$ when playing against themselves. As a special case, we recover the well-known equalizer
strategies as the equilibria supported on the entire action set.
To assess which equilibrium classes are most evolutionarily
relevant, we complement our analytical characterisation with simulations of
social learning dynamics. We find that their prevalence is determined by two
factors: how likely they are to be generated and how robust they are against
invasion.\\

\noindent
\textbf{keywords:} iterated game, symmetric Nash equilibria, reactive strategies, additive games

\section{Introduction}

Repeated games are a widely used framework in game theory to study how strategic
interactions unfold over
time~\cite{Axelrod:Science:1981,mailath:book:2006,Garcia:FRAI:2018}. In such
games, a player's behaviour in the present may depend on what happened in the
past. This allows repeated games to capture central features of social
interaction, including reciprocity, cooperation and
punishment~\cite{sigmund:book:2010,Ohtsuki:Punishment:2009}. In many settings,
players are assumed to remember only a finite number of previous moves. That is,
players employ \emph{memory-$n$} strategies, with $n$ being the number of past
rounds
remembered~\cite{hauert:PRSB:1997,martinez-vaquero:plosone:2012,stewart:scirep:2016}.
A central question is: which of these strategies form a symmetric Nash
equilibrium? That is, what are the strategies such that, when adopted by
everyone, no player has an incentive to deviate?

In general, verifying whether a given repeated game strategy is a Nash
equilibrium is non-trivial. After all, for such a verification one needs to
study all (uncountably many) deviations. In the special case of memory-$n$
strategies, however, we can build on several previous studies that simplify this
task~\cite{Press:PNAS:2012,Levinsky:EconResearch:2010,Glynatsi:PNAS:2024,
Laporte:PNAS:2026}. For example, the results of
Levinsky~\emph{et al.}\cite{Levinsky:EconResearch:2010} imply that, against a
focal memory-$n$ strategy, there always exists a best response among the pure
(i.e., deterministic) memory-$n$ strategies. In particular, if none of these
pure memory-$n$ strategies yields a larger deviation payoff, then the focal
memory-$n$ strategy is an equilibrium. Because the set of pure memory-$n$
strategies is finite, so is the number of deviations that need to be tested.

An analogous result also applies to \emph{reactive-$n$} strategies (i.e., those
memory-$n$ strategies that only depend on the opponent's past $n$ actions). For
those strategies, one can always find a best response among the pure
\emph{self-reactive-$n$} strategies -- those memory-$n$ strategies that only
depend on the player's own past $n$
actions~\cite{Glynatsi:PNAS:2024}. A further simplification is possible for
\emph{additive games}. A stage game is additive if the players' payoffs can be
decomposed into two summands, each depending on only one player's
action~\cite{McAvoy:PLOSCB:2021}. A canonical example is the donation game.
Here, cooperation imposes a cost on the acting player while providing a benefit
to the opponent. More generally, repeated additive games have the property that
total payoffs only depend on how often each player chooses an action, rather
than on how often each pair of actions occurs. For additive games, one can show
that any reactive-$n$ strategy has a best response among the pure
self-reactive-$(n\!-\!1)$
strategies~\cite{Lesigang:EconLett:2025}. This observation further reduces the
complexity of verifying whether a reactive-$n$ strategy forms a symmetric Nash
equilibrium.

Yet, even when deviations are tractable, characterising Nash equilibria remains
challenging because it requires an explicit calculation of a strategy's
self-payoff. This calculation becomes increasingly difficult in games with many
actions. Consequently, most previous characterisations of Nash equilibria have
either focused on games with two actions only, such as the repeated prisoner's
dilemma~\citep{stewart:pnas:2014,park:NComms:2022}, or restricted attention to
special subclasses of strategies for which self-payoffs are easy to compute,
such as the \emph{partner strategies}~\cite{Hilbe:partners:2015}.
Even though much work has focused on the prisoner's dilemma, there is growing
interest in social dilemmas with three or more
actions~\cite{ZHANG:Multiaction:2026,ZHANG:ThreeAction:2026,
Ohtsuki:Punishment:2009}. One example is the work of
Rand~\emph{et al}.~\cite{Ohtsuki:Punishment:2009}, who study a donation game
with an additional punishment option. Among other results, they characterise
the game's Nash equilibria among the reactive strategies.

In this paper we characterise \emph{all} symmetric Nash equilibria among the
reactive-$n$ strategies with $n\!=\!1$, which we simply refer to as
\emph{reactive}
strategies~\citep{nowak:AMC:1989,nowak:Nature:1992a,imhof:PRSB:2010,
allen:AmNat:2013}. Our characterisation applies to arbitrary additive games
with any finite number of actions and is based solely on linear conditions on
the strategy parameters.
Our characterisation reveals a simple structure of the equilibrium set. We
establish a one-to-one correspondence between non-empty subsets of the action
set and different subclasses of Nash equilibria, which we call $S$-supporting
equilibria. These equilibria use exactly the actions in $S$ when interacting
with themselves. Furthermore, if one player deviates towards another strategy,
and if that other strategy also only uses actions in $S$, the deviating
player's payoff is left invariant. That is, against a strategy that forms an
$S$-supporting equilibrium, \emph{any} strategy that only uses actions in
$S$ yields the same payoff. In the special case when $S$ is the full action
set, we recover the well-known class of equalizer
strategies~\cite{boerlijst:AMM:1997,Press:PNAS:2012,Hilbe:equalizers:2013}.

Having established this classification, we turn to the question of which
equilibrium classes are most robust from an evolutionary perspective.
Specifically, we use an evolutionary process of
imitation~\cite{traulsen:PRE:2006b} to study the stability of the different
equilibrium classes, and their ability to emerge under social learning. We
focus on a specific example: a donation game with an additional action that is
moderately cooperative. This action imposes a smaller cost than full
cooperation and provides the opponent with a smaller benefit. The results
indicate that the most robust equilibrium classes are those supported on small
sets $S$. In particular, such classes are more likely to emerge and tend to be
more stable once established.

\section{Model}

\subsection{Game setup}

We consider a two-player repeated game without discounting. 
In each round, a player (referred to as the \emph{focal player}) chooses an action out of a
finite action set $A$. 
Similarly, the other player (the \emph{opponent}) chooses an
action from a finite set $B$. 
Although the action sets
coincide in our main result, distinguishing them in some proofs helps clarify
the roles of the different players. We assume that the resulting payoffs are
defined by the function $g\!: A\!\times\! B \rightarrow \R^2$. We assume payoffs to
be additive~\cite{McAvoy:PLOSCB:2021}. 
That is, there exist functions $g_A\!: A \rightarrow \R^2$ and $g_B\! : B
\rightarrow \R^2$ such that $g = g_A + g_B$.
Further, we denote $g = (g^i, g^{-i})^T$, where $g^i$ is the focal player's
payoff and $g^{-i}$ is the opponent's payoff.

Players choose each round what to do by implementing a strategy -- a rule that determines which action to take for any previous history of play.
One important subclass of strategies is memory-$1$ strategies~\cite{sigmund:book:2010}. That is, the players' moves in each round only depend on the outcome of the previous round. Formally, a memory-$1$ strategy $p$ is a vector 
\begin{align}
(p_{ab})_{(a,b)\in A \times B} \\
p_{ab} \in \Delta^{|M|-1},
\end{align}
where $M \in \set{A,B}$ is the player's action set and $\Delta^{m-1} \coloneqq \set{x \in \R^m_{\geq 0}: \sum_{i\leq m} x_i = 1}$.
We say the focal player's strategy $p$ is \emph{reactive} if for all $a_1,a_2 \in A$ and $b \in B$
\begin{align}
    p_{a_1b} = p_{a_2b}.
\end{align}
That is, the focal player's behaviour depends only on the opponent's previous action. Accordingly, we write $p_b(a)$ for the focal player's probability to play action $a$ after observing the opponent play $b$.
Similarly, the strategy $p$ is \emph{self-reactive} if for all $a \in A$ and $b_1,b_2 \in B$
\begin{align}
    p_{ab_1} = p_{ab_2}.
\end{align}
That is, players with a self-reactive strategy only consider their own past action.
The strategy~$p$ is \emph{unconditional} if it is independent of the player's previous actions.
That is, for all $a_1,a_2 \in A$ and for all $b_1,b_2 \in B$
\begin{equation}
    p_{a_1b_1} = p_{a_2b_2}.
\end{equation}
Furthermore, a strategy is \emph{pure} if for any previous game outcome, it prescribes playing a single action with certainty (rather than randomizing over two or more actions). 
We denote by $Alla$ the pure unconditional strategy that always plays action $a \in A$.
Analogous definitions apply to the opponent with action set $B$.

\subsection{Calculation of payoffs}

Let $\pi^i_t(p, \tilde{p})$ denote the focal player's expected payoff in round $t$, given that the focal player and the opponent use strategies $p$ and $\tilde{p}$, respectively. The player's \emph{long-term payoff} is defined as
\begin{equation}\label{eq::long-term-payoff}
    \pi^i(p, \tilde{p}) \coloneqq \lim_{\tau \rightarrow \infty} \frac{1}{\tau} \sum_{t = 1}^{\tau} \pi^i_t(p,\tilde{p}).
\end{equation}
Analogously, we can define the opponent's long term payoff $\pi^{-i}$, and we set $\pi \coloneqq (\pi^i, \pi^{-i})^T$.
While the limit in~\eqref{eq::long-term-payoff} may not exist for general strategies, it does exist if both players use memory-$1$ strategies. In that case, however, it may depend on the moves in the first round, when no history has yet been observed. In the following, we only specify these first-round moves when necessary. 

To calculate long-term payoffs, we represent the game as a Markov chain. 
The states of the Markov chain are all possible outcomes of a given round, $A \times B$. 
Because players make their decisions independently, the probability of moving from one state $a_1b_1$ to another state $a_2b_2$ is given by
\begin{align}
    M_{a_1b_1,a_2b_2} = p_{a_1b_1}(a_2) \cdot \tilde{p}_{a_1b_1}(b_2).
\end{align}
For $t \geq 1$, let $v_t \in \Delta^{|A|\times |B|-1}$ denote the probability distribution of observing any possible game outcome in round $t$. For a given distribution $v_t$, we compute the next round's distribution as $v_{t+1} = v_t M$. Since $M$ is a stochastic matrix, the Theorem of Perron-Frobenius ensures that $v_t$ converges to a limiting distribution $v \in \Delta^{|A|\times |B|-1}$. This limiting distribution is a left eigenvector of $M$ with respect to the eigenvalue~$1$. If $M$ is also primitive, the limiting distribution is unique. Otherwise it is again uniquely determined by the players' initial actions.

We note that the focal player's expected payoff can be written as
\begin{align}
    \pi^i_t(p, \tilde{p}) = v_t\cdot u.
\end{align}
Here $u = (g^i(a,b))_{(a,b)\in A\times B}$ is the vector of all possible stage-game payoffs. Furthermore, if 
\begin{align}
v_\tau \overset{\tau \rightarrow \infty}{\rightarrow} v
\end{align} 
then so does $\frac{1}{\tau} \sum_{t \leq \tau} v_t$. We obtain

\begin{equation}
    \pi^i(p,\tilde{p}) = \lim_{\tau \rightarrow \infty} \frac{1}{\tau} \sum_{t = 1}^{\tau} \pi^i_t(p,\tilde{p}) 
    =\lim_{\tau \rightarrow \infty} \frac{1}{\tau} \sum_{t = 1}^{\tau} v_t \cdot u 
    = \left(\lim_{\tau \rightarrow \infty} \frac{1}{\tau} \sum_{t = 1}^{\tau} v_t \right) \cdot u = v \cdot u
\end{equation}
The opponent's long-term payoff $\pi^{-i}(p, \tilde{p})$ can be calculated similarly. 
In the following, we will use the notation $v(a,b)$, which describes the abundance of rounds in which the focal player chooses action~$a$ whereas the opponent chooses action $b$.

\subsection{Symmetric games and Symmetric Nash equilibria}\label{subsec::Model::Nash-equilibira}

In the following, we restrict attention to symmetric games, such that $A =B$. 
A \emph{best response} to a strategy $p$ is a strategy $q$ that maximizes a player's long-term payoff against $p$. 
That is, $q$ is a best response if  $\pi^i(q,p) \geq \pi^i(\sigma,p)$ for every strategy $\sigma$.
A \emph{symmetric Nash equilibrium} is a strategy $p$ satisfying
\begin{equation}
    \pi^i(p,p) \geq \pi^i(\sigma,p)
\end{equation}
for all strategies $\sigma$. That is, $p$ is a best response to itself. Throughout the paper, whenever we refer to a Nash equilibrium, we mean a symmetric Nash equilibrium in the above sense.

Our earlier results in Ref.~\cite{Lesigang:EconLett:2025} imply that against any reactive
strategy, there exists a best response among the pure unconditional strategies.
Thus, to verify whether a reactive strategy~$p$ constitutes a Nash
equilibrium, it suffices to check if for all $a \in A$,
\begin{equation} \label{Eq:NashCondition}
    \pi^i(p,p)\geq \pi^i(Alla,p).
\end{equation}
The payoffs of pure unconditional strategies against a reactive strategy
depend linearly on $p$. 
The right-hand side of \eqref{Eq:NashCondition} is therefore straightforward to calculate. 
The main technical difficulty instead lies in the self-payoff term $\pi^i(p,p)$ on the left-hand side.
This payoff typically needs to be obtained by computing the stationary distribution corresponding to the respective $|A|^2\times |A|^2$ transition matrix. 
Deriving a suitable representation of the self-payoff of a reactive strategy
is therefore the key step toward characterizing all Nash equilibria among the reactive strategies.

\section{Results}

\subsection{Characterisation of reactive Nash equilibria}\label{sec:nash_equilibria}

Our main result is the characterisation of reactive Nash equilibria in repeated
additive games. Specifically, we prove the following theorem.

\begin{Theo}\label{Theo::Better_Characterisation_of_Nash}
    Let $p$ be a reactive strategy for a repeated additive game with action set $A$.
    Then $p$ is a Nash equilibrium if and only if there exists a non-empty subset $S \subseteq A$ such that
    \begin{enumerate}
        \item $p_b(a) = 0$ for all $a \in A\setminus S$ and $b \in S$,
        \item If the stationary distribution of the strategy $p$ against itself is not unique, the initial action must be chosen such that the resulting stationary distribution is supported on $S$,
        \item $\pi^i(Alla_1,p) = \pi^i(Alla_2,p)$ for all $a_1,a_2 \in S$,
        \item $\pi^i(Alla_1,p) \geq \pi^i(Alla_2,p)$ for all $a_1 \in S$ and all $a_2 \in A\setminus S$.
    \end{enumerate}
\end{Theo}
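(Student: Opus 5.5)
The key idea is to express the self-payoff $\pi^i(p,p)$ as a convex combination of the payoffs $\pi^i(Alla,p)$, using additivity. Against a reactive $p$, the payoff of $Alla$ is linear in $p$. Write $g^i(a,b)=\alpha(a)+\beta(b)$. Since the opponent only reacts to the focal player's last action, the opponent plays according to $p_a$ in every round after the first, so
\[
\pi^i(Alla,p)=\alpha(a)+\sum_{b}p_a(b)\beta(b).
\]
Now let $v$ be a stationary distribution of $p$ against itself, and let $x(a)=\sum_b v(a,b)$ be the marginal frequency of the focal player's actions. By symmetry of the strategies, the opponent's marginal satisfies $y(b)=\sum_a x(a)p_a(b)$, because the opponent's action in one round is drawn from $p_{a}$ where $a$ is the focal player's previous action, and stationarity gives the same marginal. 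Additivity then yields
\[
\pi^i(p,p)=\sum_a x(a)\alpha(a)+\sum_b y(b)\beta(b)=\sum_a x(a)\,\pi^i(Alla,p).
\]
This identity is the main lemma. I expect it to be the crux, and its one subtle point is the use of the stationary marginal in the lag relation.

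\textbf{Sufficiency.} Assume conditions 1–4 hold. Conditions 1 and 2 ensure that the stationary distribution is supported on $S\times S$. Indeed, by condition 1, after actions in $S$ both players keep choosing from $S$, so $S\times S$ is closed, and condition 2 picks a stationary distribution inside it. Hence $x$ is supported on $S$. By condition 3, $\pi^i(p,p)=\pi^i(Alla_1,p)$ for any $a_1\in S$. By condition 4, this common value is at least $\pi^i(Alla,p)$ for every $a\in A$. Since the paper's cited result says a best response exists among the $Alla$, the strategy $p$ is a Nash equilibrium.

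\textbf{Necessity.} Assume $p$ is a Nash equilibrium, and take $S=\operatorname{supp}(x)$ for the relevant stationary distribution. This set is non-empty. Let $m=\max_a \pi^i(Alla,p)$. The Nash condition gives $\pi^i(p,p)\ge m$, while the convex combination gives $\pi^i(p,p)\le m$, with equality only if every $a\in S$ attains $m$. This establishes conditions 3 and 4. Condition 2 holds by the choice of $S$. For condition 1, stationarity together with symmetry shows that the support of $y$ equals $S$ as well. The focal player's next action after the opponent plays $b$ follows $p_b$, so $x(a)=\sum_b y(b)p_b(a)$. If $a\notin S$, then $x(a)=0$, which forces $p_b(a)=0$ for all $b$ with $y(b)>0$, that is, for all $b\in S$.

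The delicate points are the following. First, I must check that symmetric roles give $\operatorname{supp} y=S$. This follows because in self-play the joint stationary distribution can be taken symmetric, or by applying the same argument to both coordinates. Second, when the chain has several recurrent classes, the payoff is defined through the chosen initial condition, and I will treat $v$ as the limiting distribution that initial condition produces.
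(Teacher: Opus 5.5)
Your proposal is correct and follows essentially the paper's route. Your identity $\pi^i(p,p)=\sum_a x(a)\,\pi^i(Alla,p)$ is Corollary~\ref{Kor::self-payoff}, which you derive directly from the one-step stationarity relation rather than via Lemmas~\ref{lem::finding_unconditional} and~\ref{lem::unconditional_payoff}. Your derivation of Condition~1 from $x(a)=\sum_b y(b)p_b(a)$ with $S=\operatorname{supp}(x)$ is the Akin-type argument of Lemma~\ref{lem::Adopt_condition_3}, with your choice of $S$ replacing the paper's minimality argument.

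One small correction: the realized $v$ \emph{is} symmetric, not merely ``can be taken'' symmetric, because both players use the same initial move and the transition kernel is invariant under swapping coordinates. That symmetry is what gives $\operatorname{supp}(y)=S$; your alternative of ``applying the same argument to both coordinates'' would not by itself force equal supports.
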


\noindent
The first two conditions specify that $S$ consists of those actions that $p$
plays against itself. In the following, we show that these conditions are
equivalent to requiring that the stationary distribution of $p$ against itself
is supported on $S$. We therefore refer to the induced class of Nash equilibria
as $S$-supporting equilibria. According to the other two conditions, any pure
unconditional strategy that always plays the same action in $S$ must achieve the
best response payoff against~$p$; actions outside of $S$ yield a lower (or at
most equal) payoff. In fact, we show below that any opponent that restricts
their actions to $S$ receives the best response payoff, regardless of what
they condition on (even if they take more than the last round into account).
This further motivates the name $S$-supporting equilibria.

We emphasize that the above conditions are not only necessary, but also
sufficient. There are no other Nash equilibria apart from these described
classes. Suppose a Nash equilibrium is both $S_1$-supporting and
$S_2$-supporting for different $S_1$ and $S_2$. Let $S\coloneqq S_1 \cap S_2$.
Then the equilibrium is also $S$-supporting. To avoid assigning the same
equilibrium to multiple classes of Nash equilibria, we define each Nash
equilibrium to be $S$-supporting only for the minimal $S$. We thus obtain
disjoint classes of Nash equilibria that correspond to distinct subsets of the
action~set.

Interestingly, the above characterisation of Nash equilibria does not require
the calculation of self-payoffs $\pi^i(p,p)$. Moreover, since
$\pi^i(Alla,p) =
\sum_{\tilde{a}\in A}g^{-i}_A(\tilde{a}) p_a(\tilde{a}) + g^i_A(a)$
for any action $a$, all conditions in the characterisation are linear
equalities and inequalities in the strategy space. Thus, we provide a direct
and tractable method for computing all reactive Nash equilibria.\\

%-----------------------------------------------------------------------------------------
\noindent\textit{Proof.}
The proof of Theorem~\ref{Theo::Better_Characterisation_of_Nash} is split into
two main steps. We discuss the main ideas behind each step below, while the
proofs of all required lemmas can be found in the \textbf{Appendix}.

As discussed in Section~\ref{subsec::Model::Nash-equilibira}, a reactive strategy $p$ is a Nash equilibrium if and only if 
\begin{equation}\label{eq::reduced_nash_equation}
    \pi^i(p,p) \geq \pi^i(Alla,p)
\end{equation}
for all actions $a \in A$. Since the right-hand side of
Equation~\eqref{eq::reduced_nash_equation} is linear in $p$, the main difficulty
lies in analyzing the left-hand side. To avoid the explicit computation of
self-payoffs, we derive a representation that expresses the self-payoff as a
convex combination of the payoffs obtained against pure unconditional
strategies. The coefficients of this convex combination depend only on the
stationary distribution induced when $p$ plays against itself.
To obtain this representation of self-payoffs, we require two lemmas.

\begin{enumerate}
\item First, we show that any opponent strategy $\sigma$ facing $p$ can
be replaced by an unconditional strategy $q$ without changing either player's
long-run payoff. At this step, additivity of the game and $p$ being
reactive become essential. In additive games, payoffs depend only on how
frequently each player chooses each action, rather than on the frequencies of
particular action profiles. We therefore construct $q$ so that it plays each
action with the same frequency as $\sigma$. Since $p$ conditions only on the
opponent's current action frequencies and not on additional history,
replacing $\sigma$ by $q$ leaves the action frequencies of $p$ unchanged.
Therefore, the long-run payoffs remain unchanged. This result is formalized
in the following lemma, where we distinguish the players' action sets by $A$ and
$B$ for notational clarity.\\

\begin{Lemma}\label{lem::finding_unconditional} Let $p : B \rightarrow
    \Delta^{\left|A\right|-1}$ be an arbitrary but fixed reactive strategy
    and $\sigma : A^n \rightarrow \Delta^{\left|B\right|-1}$ be an arbitrary memory-$n$ opponent for some $n\in \N$ opponent. We assume that the initial first $n$ actions  of $\sigma$ are fixed but arbitrary. Let $v\in \Delta^{\left|A\right|\times\left|B\right|-1}$ where $v(a,b)$ is the frequency of the action pair $(a,b)$ being played by $p$ and $\sigma$ during the course of the game. Note that if $\sigma$ is a memory-$1$ strategy then $v$ is the stationary distribution of the game. We define an
    unconditional strategy $q$ via
    \begin{equation}
        q(b) \coloneqq \sum_{a \in A} v(a,b).
    \end{equation}
    Then
    \begin{equation}
        \begin{aligned}
            \begin{pmatrix}
                \pi^i(p,\sigma)\\
                \pi^{-i}(p,\sigma)
            \end{pmatrix} = 
            \begin{pmatrix}
                \pi^i(p,q)\\
                \pi^{-i}(p,q)
            \end{pmatrix}.
        \end{aligned}
    \end{equation}
\end{Lemma}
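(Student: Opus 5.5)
The plan is to show that both long-run payoff vectors depend only on two marginals of the frequency distribution $v$: the opponent's marginal $q(b)=\sum_a v(a,b)$ and the focal player's marginal $r(a)=\sum_b v(a,b)$. We then check that replacing $\sigma$ by the unconditional strategy $q$ leaves both marginals unchanged.

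\emph{Payoffs depend only on marginals.} By additivity, $g=g_A+g_B$, so for either player's component
\begin{equation*}
\pi(p,\sigma)=\sum_{a,b} v(a,b)\bigl(g_A(a)+g_B(b)\bigr)=\sum_a r(a)\,g_A(a)+\sum_b q(b)\,g_B(b).
\end{equation*}
The same formula holds for the pair $(p,q)$ with its own frequency distribution $w$ and marginals $r',q'$. It therefore suffices to prove $q'=q$ and $r'=r$.

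\emph{Opponent marginal under $q$.} Since $q$ is unconditional, the opponent plays $b$ independently with probability $q(b)$ in every round. Hence $q'(b)=q(b)$ (up to the first-round move, which does not affect time averages).

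\emph{Focal marginal under $\sigma$.} Since $p$ is reactive, the focal player's action in round $t+1$ depends only on the opponent's action $b_t$ in round $t$. Thus $\Pr(a_{t+1}=a)=\sum_b \Pr(b_t=b)\,p_b(a)$. Averaging over $t\le\tau$ and letting $\tau\to\infty$, the boundary term vanishes and
\begin{equation*}
r(a)=\sum_b q(b)\,p_b(a).
\end{equation*}

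\emph{Focal marginal under $q$.} The identical computation gives $r'(a)=\sum_b q'(b)\,p_b(a)=\sum_b q(b)\,p_b(a)=r(a)$. This finishes the argument.

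The main obstacle is the existence of the limiting frequencies when $\sigma$ is memory-$n$, together with the justification that time averages of the round distributions converge. For a memory-$n$ $\sigma$, the play of $p$ against $\sigma$ is a finite Markov chain on $A^n\times B^n$-type histories, so Ces\`aro averages of the round distributions converge, and the joint frequency $v$ is this limit projected to the last round. Given the fixed initial moves, I would take $v$ as that Ces\`aro limit, and the shift-by-one argument in the marginal identity goes through because the discrepancy between round $t$ and round $t+1$ averages is $O(1/\tau)$. The statement implicitly assumes $v$ is well defined, so I would note this, invoke the Markov-chain convergence from Section~2.2, and otherwise keep the argument purely at the level of marginals.
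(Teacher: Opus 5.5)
Your proof is correct and follows the paper's argument. Additivity reduces both payoff vectors to the two marginals of the frequency distribution, the unconditional $q$ reproduces the opponent's marginal, and because $p$ is reactive, $r(a)=\sum_b q(b)\,p_b(a)$ holds in both games. The only difference is that you derive these marginal identities directly with the one-step shift and Ces\`aro-average argument, whereas the paper obtains them by citing its multi-action Akin lemma (Lemma~\ref{lem::Akin}), which is the same telescoping argument stated in general form.
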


\item Next, we show that the payoff of an unconditional strategy against a
reactive opponent can be represented as a convex combination of the payoffs
of pure unconditional strategies.\\

\begin{Lemma}\label{lem::unconditional_payoff}
    Let $p : B \rightarrow \Delta^{\left|A\right|-1}$ be a reactive strategy and $q \in \Delta^{\left|B\right|-1}$ be an unconditional strategy. It holds that
    \begin{equation}\label{eq::convex_comb}
        \begin{aligned}
            \begin{pmatrix}
                \pi^i(p,q)\\
                \pi^{-i}(p,q)
            \end{pmatrix} = 
            \sum_{b\in B} q(b)
            \begin{pmatrix}
                \pi^i(p,Allb)\\
                \pi^{-i}(p,Allb)
            \end{pmatrix}.
        \end{aligned}
    \end{equation}
\end{Lemma}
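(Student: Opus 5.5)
We compute both sides of \eqref{eq::convex_comb} directly from the structure of the game between $p$ and an unconditional opponent. Additivity will be the key ingredient, together with the fact that the focal player's behaviour in round $t+1$ depends only on the opponent's action in round $t$.

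\emph{Stationary frequencies against $q$.} When the opponent plays the unconditional strategy $q$, the opponent's action in every round is drawn independently from $q$. Hence the marginal frequency of the opponent's actions is $q(b)$. The focal player's action in round $t+1$ is drawn from $p_{b_t}$, where $b_t \sim q$ is the opponent's action in round $t$. Therefore the long-run marginal frequency of the focal player's actions is
\[
x(a) \coloneqq \sum_{b \in B} q(b)\, p_b(a).
\]
This is independent of the initial round, so no uniqueness issue arises for the long-run averages of marginals.

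\emph{Stationary frequencies against $Allb$.} When the opponent uses $Allb$, the opponent plays $b$ in every round (after possibly the first). The focal player therefore plays $a$ with frequency $p_b(a)$ in the long run, and the opponent's marginal is the point mass on $b$.

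\emph{Using additivity.} Because $g = g_A + g_B$, the long-run payoff vector depends only on the marginals. For any joint frequency $v$ with marginals $x_A$ and $x_B$,
\[
\pi(p,\sigma) = \sum_{a} x_A(a)\, g_A(a) + \sum_{b} x_B(b)\, g_B(b).
\]
Applying this to $q$ gives
\[
\pi(p,q) = \sum_a \sum_b q(b)\, p_b(a)\, g_A(a) + \sum_b q(b)\, g_B(b).
\]
Regrouping by $b$ yields
\[
\sum_b q(b) \Big( \sum_a p_b(a)\, g_A(a) + g_B(b) \Big) = \sum_b q(b)\, \pi(p, Allb),
\]
which is the claim. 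Both components $i$ and $-i$ are treated simultaneously, since $g_A$ and $g_B$ are $\mathbb{R}^2$-valued.

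\emph{Main obstacle.} The only delicate point is justifying that the Ces\`aro-averaged marginal of the focal player equals $\sum_b q(b)\, p_b(a)$. In particular, one must check that $\pi$ is well defined via the limit in \eqref{eq::long-term-payoff} even if $M$ is not primitive. This follows because the round-$(t+1)$ marginal of $a$ is exactly $\sum_b q(b)\, p_b(a)$ for every $t \ge 1$: the opponent's round-$t$ action has law $q$ whenever $q$ governs that round. So the averages converge regardless of initial conditions, and the first round contributes nothing in the limit.
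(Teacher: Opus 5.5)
Your proof is correct and has the same skeleton as the paper's: compute both sides explicitly, use $g=g_A+g_B$ to reduce payoffs to action marginals, and match the two expressions after regrouping by $b$. The difference is in how you obtain the left-hand side.

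The paper does not analyse the play against $q$ directly. It imports the reactive-versus-self-reactive payoff formula of \cite{Lesigang:EconLett:2025}, $\pi(p,q)=\sum_{b}q(b)\sum_{a,\tilde b}g(a,\tilde b)\,p_b(a)\,q(\tilde b)$. In effect this supplies the whole joint frequency $v(a,\tilde b)=x(a)\,q(\tilde b)$, which additivity then collapses.

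You derive only what additivity needs, namely the marginals, from the one-step identity $\Pr(a_{t+1}=a)=\sum_{b}\Pr(b_t=b)\,p_b(a)$. This makes your argument self-contained. It also shows directly, without stationary distributions or primitivity, that the limit in \eqref{eq::long-term-payoff} exists and does not depend on the first-round moves; the paper leaves that point to the cited result.

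In exchange, the paper's route yields the joint distribution of play, which would remain valid and necessary if additivity were dropped. Your computation relies on additivity from the start.
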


\end{enumerate}

\noindent
Combining Lemma~\ref{lem::finding_unconditional} and
Lemma~\ref{lem::unconditional_payoff}, and setting $A = B$, immediately yields
Corollary~\ref{Kor::self-payoff} where we obtain our representation of
self-payoffs.

\begin{Kor}\label{Kor::self-payoff}
    Let $p$ be a reactive strategy. Let $v \in \Delta^{2|A|-1}$ be the stationary distribution of $p$ playing against itself. Then the self-payoff can be written as 
    \begin{align}
        \pi^i(p,p) = \sum_{a \in A} \left(\sum_{\tilde{a} \in A} v(a, \tilde{a})\right) \pi^i(Alla,p).
    \end{align}
\end{Kor}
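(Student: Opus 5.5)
The plan is to combine Lemma~\ref{lem::finding_unconditional} and Lemma~\ref{lem::unconditional_payoff} with the roles of the two players swapped. In the self-play game the focal player is the one whose payoff we want, and we view the \emph{opponent} as the reactive player $p$. By the symmetry of the game, $\pi^i(p,p)=\pi^{-i}(p,p)$, where on the right-hand side the first argument is now regarded as the ``opponent strategy'' $\sigma$. We therefore apply Lemma~\ref{lem::finding_unconditional} with the reactive player being one copy of $p$ (action set $B=A$) and $\sigma=p$ (a memory-$1$ strategy, with the prescribed initial action) being the other player.

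\emph{Step 1.} Lemma~\ref{lem::finding_unconditional} replaces the player $\sigma=p$ by the unconditional strategy $q$ with $q(a)=\sum_{\tilde a\in A}v(a,\tilde a)$. Here $v(a,\tilde a)$ is the stationary frequency with which the player being replaced plays $a$. Both players' long-run payoffs are unchanged, so in particular the payoff of the player using $\sigma$ is preserved. Translated back to the paper's convention (focal player first), this gives $\pi^i(p,p)=\pi^i(q,p)$.

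\emph{Step 2.} Lemma~\ref{lem::unconditional_payoff}, read from the perspective of the unconditional player (again via the $\pi^{-i}$ component and symmetry), gives
\begin{equation*}
\pi^i(q,p)=\sum_{a\in A}q(a)\,\pi^i(Alla,p).
\end{equation*}
Substituting the definition of $q$ yields exactly the claimed formula.

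The only delicate point is bookkeeping of indices and roles. The lemmas are stated with the reactive strategy as the \emph{first} argument, whose frequencies are indexed by $a\in A$. In the corollary, by contrast, the marginal is taken over the focal player's own action $a$, with $v(a,\tilde a)$ indexed as (focal, opponent). To handle this, I would check explicitly that the $\pi^{-i}$ component in the lemmas corresponds, under relabelling $A\leftrightarrow B$, to the focal player's payoff $\pi^i$ in the corollary. I would also check that the marginal of $v$ used to define $q$ is the one belonging to the player who is replaced.

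If the stationary distribution of $p$ against itself is not unique, $v$ is the one determined by the fixed initial actions. Lemma~\ref{lem::finding_unconditional} explicitly allows arbitrary fixed initial moves, so no extra argument is needed there.
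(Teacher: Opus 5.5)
Your proposal is correct and follows the paper's own argument, which simply combines Lemma~\ref{lem::finding_unconditional} and Lemma~\ref{lem::unconditional_payoff} with $A=B$. The role swap you make explicit is exactly the bookkeeping the paper leaves implicit: you read off the $\pi^{-i}$ component so that the focal player is the one replaced by $q$, and you use the symmetry of the stage game to turn $\pi^{-i}(p,Alla)$ into $\pi^i(Alla,p)$.
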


As shown in~\cite{Lesigang:EconLett:2025}, there exists at least one action
$a \in A$ such that the corresponding pure unconditional strategy $Alla$
is a best response to $p$. For $p$ to be a Nash equilibrium, its self-payoff
must coincide with the payoff of a best response. Corollary~\ref{Kor::self-payoff}
establishes that the self-payoff can be expressed as a convex combination of the
payoffs of pure unconditional strategies. Therefore, every coefficient
corresponding to a payoff that is not maximal must be zero. Since these
coefficients are given by entries of the stationary distribution of $p$ against
itself, this determines the support of the stationary distribution. Formally, we
obtain the following lemma.

\begin{Lemma}\label{Lem::Characterisation_of_Nash}
    Let $p$ be a reactive strategy and $v$ be a stationary distribution of $p$ playing against itself. 
    Then $p$ is a Nash equilibrium if and only if there exists a non-empty subset $S \subseteq A$ such that
    \begin{enumerate}
        \item $\supp{v} \subseteq S\times S$, i.e., $v(a_1,a_2) = 0$ if $a_1 \in A\setminus S$ or $a_2 \in A\setminus S$,
        \item If the stationary distribution is not unique, the initial action must be chosen such that the resulting stationary distribution is in $S$.
        \item $\pi^i(Alla_1,p) = \pi^i(Alla_2,p)$ for all $a_1,a_2 \in S$,
        \item $\pi^i(Alla_1,p) \geq \pi^i(Alla_2,p)$ for all $a_1 \in S$ and $a_2 \in A\setminus S$.
    \end{enumerate}
\end{Lemma}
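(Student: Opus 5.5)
We combine Corollary~\ref{Kor::self-payoff} with the reduction from~\cite{Lesigang:EconLett:2025}: the reactive strategy $p$ is a Nash equilibrium if and only if $\pi^i(p,p)\geq \pi^i(Alla,p)$ for all $a\in A$. Set $m \coloneqq \max_{a\in A}\pi^i(Alla,p)$, and let $w(a)\coloneqq\sum_{\tilde a}v(a,\tilde a)$ denote the marginal of the stationary distribution on the focal player's action. By the corollary, $\pi^i(p,p)=\sum_a w(a)\,\pi^i(Alla,p)$, which is a convex combination. Hence $\pi^i(p,p)\le m$, with equality if and only if $w(a)>0$ implies $\pi^i(Alla,p)=m$. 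So $p$ is an equilibrium if and only if every action in $\supp w$ attains the maximum $m$.

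For the forward direction, assume $p$ is a Nash equilibrium. Define $S\coloneqq \supp w\cup \supp w'$, where $w'(\tilde a)\coloneqq\sum_a v(a,\tilde a)$ is the opponent's marginal. Then $S$ is non-empty because $v$ is a probability vector, and condition~1 holds by construction. Condition~2 holds because $v$ is the stationary distribution actually reached from the chosen initial actions.

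For conditions~3 and~4, we need every action in $S$ to attain $m$. This is the main obstacle, since the convex-combination argument only controls $\supp w$. By symmetry of the self-play, however, the two players' marginals coincide. Both players use the same reactive $p$, and the game is symmetric; swapping the roles of the players gives the corresponding result for $\pi^{-i}$. Concretely, $w(a)=\sum_b w'(b)\,p_b(a)$ and $w'(\tilde a)=\sum_a w(a)\,p_a(\tilde a)$. If the initial actions are symmetric, then $w=w'$. I would first try to argue this directly. Otherwise, I would fall back on defining $S$ as $\supp w$ together with a symmetric counterpart argued via the opponent's best-response condition, which is the same inequality by symmetry. Given $w=w'$, we have $S=\supp w$. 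Every $a\in S$ then has $\pi^i(Alla,p)=m$, which gives condition~3, and every $a\notin S$ has payoff at most $m$, which gives condition~4.

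For the converse, suppose $S$ satisfies conditions~1--4. By condition~1, $w(a)=0$ for $a\notin S$, so $\pi^i(p,p)=\sum_{a\in S}w(a)\,\pi^i(Alla,p)$. By condition~3, all these payoffs equal a common value $c$, so $\pi^i(p,p)=c$. By condition~4, $c\ge \pi^i(Alla,p)$ for every $a\notin S$, so $c=m$. Thus $\pi^i(p,p)=m\ge\pi^i(Alla,p)$ for all $a$, and $p$ is a Nash equilibrium by the reduction. Condition~2 is used only so that the $v$ entering Corollary~\ref{Kor::self-payoff} is the one realised in play, and hence is supported on $S\times S$.
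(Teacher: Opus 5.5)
Your argument is correct and is essentially the paper's: both directions rest on the convex-combination form of $\pi^i(p,p)$ from Corollary~\ref{Kor::self-payoff} together with the symmetry $w=w'$ of the self-play marginals. The only difference is the choice of $S$: you take the support of the marginals, so Condition~1 is automatic and you land directly on the minimal $S$ used in Lemma~\ref{lem::Adopt_condition_3}, whereas the paper takes the set of maximisers of $a\mapsto\pi^i(Alla,p)$, so Conditions~3--4 are automatic and Condition~1 is what must be shown. The step you leave open is the one the paper simply asserts (``due to symmetry of $v$''). Your marginal relations do not close it, since they also hold for asymmetric stationary distributions such as $\delta_{(C,D)}$ under the two-action strategy with $p_C(D)=p_D(C)=1$. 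It follows instead because $M_{(a_1,b_1),(a_2,b_2)}=p_{b_1}(a_2)\,p_{a_1}(b_2)$ is invariant under swapping both coordinates, so identical initial moves make every $v_t$, and hence $v$, symmetric.
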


\noindent
Stationary distributions are in general costly to compute. Therefore, we require the following lemma, which replaces Condition~1 of Lemma~\ref{Lem::Characterisation_of_Nash} with a more tractable condition.

\begin{Lemma}\label{lem::Adopt_condition_3}
    If the $S$ in Lemma~\ref{Lem::Characterisation_of_Nash} is minimal, then we can replace the lemma's first condition by
    \begin{enumerate}
        \item[1.] $p_b(a) = 0$ for all $a \in A\setminus S$ and $b \in S$, 
    \end{enumerate}
\end{Lemma}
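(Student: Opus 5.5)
The plan is to show that, once the remaining conditions of Lemma~\ref{Lem::Characterisation_of_Nash} hold and $S$ is chosen minimal, the support condition $\supp{v}\subseteq S\times S$ is equivalent to the transition condition $p_b(a)=0$ for all $a\in A\setminus S$ and $b\in S$. Condition~2 on the initial actions is kept as it is. I would argue via the Markov chain on $A\times A$ with transitions $M_{a_1b_1,a_2b_2}=p_{b_1}(a_2)\,p_{a_1}(b_2)$. Because $p$ is reactive, the focal player's next action depends only on the opponent's last action, and vice versa.

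\emph{Transition condition $\Rightarrow$ support condition.} Suppose $p_b(a)=0$ whenever $b\in S$ and $a\notin S$. Take any state $(a_1,b_1)\in S\times S$. The focal player's next action is drawn from $p_{b_1}$, which is supported in $S$. The opponent's next action is drawn from $p_{a_1}$, which is also supported in $S$. Hence $S\times S$ is a closed set of states for $M$. By Condition~2, the initial actions are chosen so that play starts in (or the relevant stationary distribution lives on) this closed class. Therefore every $v_t$, and hence the limit $v$, is supported on $S\times S$. Minimality of $S$ is not needed for this direction.

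\emph{Support condition $\Rightarrow$ transition condition.} This direction is the main obstacle, and it is where minimality enters. First I would let $T\subseteq S$ be the set of actions that receive positive marginal frequency under $v$ for either player. Then $\supp{v}\subseteq T\times T$. I claim that $T$ also satisfies Conditions~3 and~4 of Lemma~\ref{Lem::Characterisation_of_Nash}:
\begin{itemize}
\item Condition~3 for $T$ is inherited from $S$, since $T\subseteq S$.
\item For Condition~4, take $a_1\in T$ and $a_2\notin T$. If $a_2\in S$, Condition~3 for $S$ gives equality of the two payoffs. If $a_2\notin S$, Condition~4 for $S$ applies directly.
\end{itemize}
Minimality of $S$ then forces $T=S$. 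So every $b\in S$ is played by the opponent with positive frequency, i.e.\ $v(a',b)>0$ for some $a'$.

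Next I would use stationarity, $v=vM$. The mass $v(a',b)$ flows to states $(a,\tilde a)$ with weight $p_b(a)\,p_{a'}(\tilde a)$. Summing over $\tilde a$, the total flow into the set $\{a\}\times A$ from state $(a',b)$ is $v(a',b)\,p_b(a)$. If $p_b(a)>0$ for some $a\notin S$, then the stationary distribution would give positive mass to $\{a\}\times A$. This contradicts $\supp{v}\subseteq S\times S$. Hence $p_b(a)=0$ for all $b\in S$ and $a\notin S$.

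Combining the two directions, Condition~1 of Lemma~\ref{Lem::Characterisation_of_Nash} may be replaced by the stated linear condition whenever $S$ is minimal. Together with Condition~2, this yields exactly the conditions of Theorem~\ref{Theo::Better_Characterisation_of_Nash}.
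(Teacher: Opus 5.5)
Your proposal is correct and follows essentially the paper's argument. For one direction you use closedness of $S\times S$ together with Condition~2. For the other you use the stationarity (equivalently, multi-action Akin) identity $\sum_{\tilde a}v(a,\tilde a)=\sum_{(a',b)}v(a',b)\,p_b(a)$, with minimality ruling out an action of $S$ that is never played. Your set $T$ of played actions is just a repackaging of the paper's contradiction via $S\setminus\{a_1\}$.

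Two small implicit steps are worth stating. First, passing from $T=S$ to $v(a',b)>0$ for some $a'$ uses the symmetry of $v$ in self-play, as the paper also does; alternatively, run the same flow argument on the opponent's marginal. Second, when the stationary distribution is unique, $v_t$ need not live on $S\times S$. The conclusion for $v$ still holds because the closed set $S\times S$ carries a stationary distribution, and uniqueness forces $v$ to be that distribution.
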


Combining these results concludes the proof of Theorem~\ref{Theo::Better_Characterisation_of_Nash}.

\subsection{Best responses to $S$-supporting equilibria}

As noted above, the first two conditions of
Theorem~\ref{Theo::Better_Characterisation_of_Nash} imply that an $S$-supporting
equilibrium $p$ uses only actions in $S$ when playing against itself. We next show that $p$
satisfies a second property: any opponent that restricts its play to
actions in $S$ when facing $p$ is a best response to $p$. In other words, $p$
rewards every such opponent with the best response payoff, regardless of the
frequencies with which the actions in $S$ are played. We formalize this property
in the following proposition.

\begin{Prop}\label{Prop::opponent-payoff}
    Let $p$ be an $S$-supporting Nash
    equilibrium and $\sigma$ be an arbitrary memory-$n$ opponent for some $n \in \N$. Let $v \in
    \Delta^{2|A|-1}$ denote the action frequencies of $\sigma$ playing
    against $p$. We assume that $\sigma$ only plays actions in $S$ against $p$
    in the long-run, i.e., $v(a_1,a_2) = 0$ if $a_1 \notin S$. Then, $\sigma$ is
    a best response to $p$. 
\end{Prop}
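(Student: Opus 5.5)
The plan is to reduce the payoff of $\sigma$ against $p$ to a convex combination of payoffs of pure unconditional strategies, and then use Conditions~3 and~4 of Theorem~\ref{Theo::Better_Characterisation_of_Nash}. Here $\sigma$ takes the role of the focal player and $p$ the role of the opponent. Since the game is symmetric ($A=B$), I will apply Lemma~\ref{lem::finding_unconditional} and Lemma~\ref{lem::unconditional_payoff} with the roles of the players swapped. In those lemmas the reactive strategy is listed first, so their statements transfer directly once we read $\pi^{-i}$ as the payoff of the player using $\sigma$.

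\textbf{Step 1: replace $\sigma$ by an unconditional strategy.} Let $v$ be the action frequencies of $\sigma$ against $p$. Define the unconditional strategy
\[
q(a)\coloneqq\sum_{\tilde a\in A} v(a,\tilde a),
\]
which is the marginal frequency with which $\sigma$ plays $a$. By Lemma~\ref{lem::finding_unconditional}, $\pi^i(\sigma,p)=\pi^i(q,p)$. The hypothesis $v(a_1,a_2)=0$ for $a_1\notin S$ gives $q(a)=0$ for all $a\notin S$, so $q$ is supported on $S$.

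\textbf{Step 2: decompose into pure unconditional strategies.} By Lemma~\ref{lem::unconditional_payoff},
\[
\pi^i(q,p)=\sum_{a\in S} q(a)\,\pi^i(Alla,p).
\]
Condition~3 of Theorem~\ref{Theo::Better_Characterisation_of_Nash} says that all $\pi^i(Alla,p)$ with $a\in S$ share a common value $c$. Since $\sum_{a\in S} q(a)=1$, this gives $\pi^i(\sigma,p)=c$.

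\textbf{Step 3: identify $c$ as the best-response payoff.} By Condition~4, $c\geq \pi^i(Alla,p)$ for every $a\in A$. By the result of~\cite{Lesigang:EconLett:2025}, some $Alla$ is a best response to $p$, so $c$ equals the best-response payoff $\max_{\tau}\pi^i(\tau,p)$. Therefore $\sigma$ is a best response. Equivalently, $p$ is a Nash equilibrium, and Corollary~\ref{Kor::self-payoff} gives $\pi^i(p,p)=c$ because $p$'s self-play is supported on $S$.

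\textbf{Main obstacle.} The delicate point is Step~1: checking that Lemma~\ref{lem::finding_unconditional} really applies with the roles exchanged. That lemma lets the reactive strategy $p$ face an arbitrary memory-$n$ opponent. Here $\sigma$ may condition on both players' histories, while $p$ conditions only on $\sigma$'s last action. The frequencies of $p$'s actions are then $\sum_b q(b)p_b(\cdot)$, and the additive payoffs of both players depend only on the marginal frequencies. This is exactly the mechanism behind the lemma, so the symmetric use is legitimate.

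A second issue is that the frequencies $v$ must exist, that is, the Cesàro limit must exist. I will assume this as in the statement; if it fails, I would argue along subsequences, since every limit point satisfies the same bounds.
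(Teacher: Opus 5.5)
Your proposal is correct and follows essentially the paper's proof. Lemma~\ref{lem::finding_unconditional} and Lemma~\ref{lem::unconditional_payoff} write $\pi^i(\sigma,p)$ as the convex combination $\sum_{s\in S} q(s)\,\pi^i(Alls,p)$. Conditions~3 and~4 of Theorem~\ref{Theo::Better_Characterisation_of_Nash}, together with the pure unconditional best response from \cite{Lesigang:EconLett:2025}, then identify the common value as the best-response payoff. Your Step~3 only spells out what the paper compresses into ``every $Alls$ with $s\in S$ attains the maximal payoff.'' The Ces\`aro limit you worry about always exists here, because play forms a finite Markov chain on length-$n$ histories.
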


\noindent
In the special case that $S$ is the entire action set, $S\!=\!A$, the above result implies that any strategy yields the same payoff against an $A$-supporting equilibrium. 
Thus, $A$-supporting equilibria coincide with the class of equalizer
strategies~\cite{boerlijst:AMM:1997,Press:PNAS:2012,Hilbe:equalizers:2013}.

\begin{proof}
    It follows immediately from Lemma~\ref{lem::finding_unconditional} and Lemma~\ref{lem::unconditional_payoff} that
    \begin{equation}\label{eq::general_opponent_general_form}
        \pi^i(\sigma,p) = \sum_{a\in A} \left(\sum_{\tilde{a} \in A} v(a,\tilde{a})\right) \pi^i(Alla,p).
    \end{equation}
    Recall that $v(a,\tilde{a})$ denotes the long-run frequency of the state in
    which $\sigma$ plays action $a$ while $p$ plays action $\tilde{a}$. Since
    $\sigma$ only uses actions in $S$ in the long run,
    Equation~\eqref{eq::general_opponent_general_form} simplifies to
    \begin{equation}
    \pi^i(\sigma,p) = \sum_{s\in S}
    \left(\sum_{\tilde{a} \in A} v(s,\tilde{a})\right) \pi^i(Alls,p).
    \end{equation}
    Thus, the payoff of $\sigma$ is a convex combination of the payoffs induced
    by the pure unconditional strategies $Alls$ for $s \in S$. Since $p$ is an
    $S$-supporting equilibrium, every strategy $Alls$ with $s \in S$ attains
    the maximal payoff against $p$. Therefore, every term in the above convex
    combination equals the maximal payoff, implying that, for every $s \in S$,
    \begin{equation}
    \pi^i(\sigma,p) = \pi^i(Alls,p).
    \end{equation}
    In particular, $\sigma$ is a best response to $p$.
\end{proof}

\subsection{Dimensions of equilibrium classes}
The characterisation in Theorem~\ref{Theo::Better_Characterisation_of_Nash}
partitions the set of reactive Nash equilibria into classes indexed by subsets
of the action set. A natural question is how large these classes are. While the
existence of a given class depends on the underlying stage game, we can compare
classes in a game-independent way by studying their maximal dimensions. More
precisely, we ask how many free parameters remain once the conditions defining
an $S$-supporting equilibrium have been imposed.

We first note that the existence of an $S$-supporting equilibrium depends on the
payoff structure of the stage game. To see this, consider an arbitrary game with
action set $A$ and let $a_j,a_k\in A$. Assume that
$\max_{\sigma \in \Sigma}\pi(Alla_j,\sigma) < \min_{\sigma \in \Sigma}\pi(Alla_k,\sigma),$
where $\Sigma$ is the set of all strategies for which stationary
distributions against $Alla_j$ and $Alla_k$ exist. Then there exists no
strategy $p \in \Sigma$ such that $Alla_j$ receives a payoff against $p$ that
is at least as large as the payoff of $Alla_k$ against $p$. Consequently, no
$S$-supporting equilibrium can exist for any subset $S\subseteq A$ with
$a_k\in S$.

Despite this dependence on the stage game, the dimensions of equilibrium
classes can be bounded purely as a function of $|S|$ and the number of actions.
Let $m\coloneqq |A|$. We define the quantity $d_{|S|,m}$ to be the maximum
number of independent parameters that can be chosen within an
$S$-supporting equilibrium class. Larger values therefore correspond to larger
families of equilibria. The following proposition provides this upper bound.

\begin{Prop}\label{Prop::Degrees_of_freedom}
    Let $S \subseteq A$ be arbitrary but fixed and $m\coloneqq |A|$. The class of $S$-supporting Nash equilibria has at most
    \begin{equation}
        d_{|S|,m} = (|S|-1)^2 + (m-|S|)(m-1)
    \end{equation}
    degrees of freedom.
\end{Prop}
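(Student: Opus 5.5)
We count free parameters by starting from the full reactive strategy space and subtracting the independent linear constraints imposed by Theorem~\ref{Theo::Better_Characterisation_of_Nash}. A reactive strategy $p$ consists of the $m$ distributions $p_b\in\Delta^{m-1}$, one for each $b\in A$. It therefore has $m(m-1)$ free parameters. We treat condition~2 (choice of initial action) as contributing no dimension, and condition~4 as a system of inequalities. Inequalities cut out a full-dimensional region of an affine subspace but never lower its dimension, so they can only make the class empty. This is why the result is only an upper bound (``at most''). The bound will then come from conditions~1 and~3.

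\textbf{Condition~1.} We split the columns $b$ into $b\in S$ and $b\in A\setminus S$. For $b\in A\setminus S$ there is no constraint, which contributes $(m-|S|)(m-1)$ parameters. For $b\in S$, condition~1 forces $p_b(a)=0$ for all $a\notin S$. Each such $p_b$ is then a distribution on $S$, i.e.\ an element of $\Delta^{|S|-1}$. This gives $|S|(|S|-1)$ parameters from these columns.

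\textbf{Condition~3.} We use the formula
$\pi^i(Alla,p)=\sum_{\tilde a}g^{-i}_A(\tilde a)p_a(\tilde a)+g^i_A(a)$,
valid for any $a$. For $a\in S$ only the column $p_a$ appears, and it is restricted to $S$. Condition~3 imposes $|S|-1$ linear equations
\[
\pi^i(Alla_1,p)=\pi^i(Alla_k,p),\qquad k=2,\dots,|S|,
\]
each of which involves only the columns $p_{a_1}$ and $p_{a_k}$. To bound the dimension from above, we argue that generically these equations are independent. For each $k$, the equation constrains $p_{a_k}$ through the linear functional $x\mapsto\sum_{s\in S}g^{-i}_A(s)x(s)$, restricted to the simplex face on $S$. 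If this functional is non-constant on that face, the equations for different $k$ involve disjoint new variables and are therefore independent. Each one then removes one dimension, giving
\[
|S|(|S|-1)-(|S|-1)=(|S|-1)^2 .
\]
If the functional is constant on the face, the equations either fail, making the class empty, or hold automatically. In that degenerate case the count would not decrease.

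\textbf{Main obstacle.} This degenerate case is the step needing care. When $g^{-i}_A$ is constant on $S$ (for instance $|S|=1$, or a game in which the actions in $S$ give identical benefits to the opponent), condition~3 reduces to $g^i_A(a_1)=g^i_A(a_k)$. This is a condition on the game, not on $p$. I would handle it by interpreting $d_{|S|,m}$ as the maximal dimension over games for which the conditions genuinely constrain $p$. For $|S|=1$ the formula gives $0+(m-1)^2$, which matches the count with no equations, so that case is consistent. For $|S|\ge2$ with degenerate payoffs I would need to check whether the paper's intended statement excludes such games. Failing that, I would restrict to the nondegenerate case in which $g^{-i}_A$ is non-constant on $S$, where the computation above gives exactly $(|S|-1)^2+(m-|S|)(m-1)$.
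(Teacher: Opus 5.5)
Your counting is the same as the paper's. After Condition~1, the columns $p_b$ with $b\in S$ carry $|S|(|S|-1)$ parameters, Condition~3 removes $|S|-1$ of them, the columns $b\notin S$ carry at most $(m-|S|)(m-1)$, and the remaining conditions can only shrink the set. The paper merely asserts that Condition~3 removes ``at least $|S|-1$'' parameters. Your observation that the $k$-th equation is the only one containing $p_{a_k}$, and that it does so through the functional $x\mapsto\sum_{s\in S}g^{-i}_A(s)x(s)$, is what actually justifies this step. It does so exactly when $g^{-i}_A$ is non-constant on $S$. One small correction: inequalities \emph{can} lower the dimension. For example, in the $\{C\}$ row of Table~\ref{tab:three_action_classes} with $b_1=c_1$ and $b_2>0$, condition (c) forces $p_D=(0,0,1)$. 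For an upper bound you only need that the class lies in the affine set cut out by Conditions~1 and~3, so this does not hurt you.

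You should not leave the degenerate case open: it is a genuine counterexample to the statement as written. The gap is therefore in the paper's proof, not in your reasoning. Take the paper's three-action donation game with $b_1=b_2=b>c=c_1=c_2>0$, i.e.\ two payoff-identical cooperative actions, and let $S=\{C,M\}$.
\begin{itemize}
\item Condition~1 gives $p_C=(x,1-x,0)$ and $p_M=(y,1-y,0)$.
\item Then $\pi^i(\mathrm{All}C,p)=\pi^i(\mathrm{All}M,p)=b-c$ for every such $p$, so Condition~3 imposes nothing.
\item Condition~4 reads $p_D(C)+p_D(M)\le 1-c/b$, which has non-empty interior in $\Delta^2$.
\item For $x,y\in(0,1)$ and $p_D$ in the interior of the simplex, the self-play chain has a unique stationary distribution with full support on $S\times S$, so $S$ is minimal.
\end{itemize}
This class is therefore $4$-dimensional, whereas $d_{2,3}=3$. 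Consistently with this, the table's $\{C,M\}$ row contains $\frac{c_1-c_2}{b_1-b_2}$, which is undefined here.

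If instead $g^{-i}_A$ is constant on $S$ but $g^{i}_A$ is not, Condition~3 fails for every $p$ and the class is empty, so the bound holds trivially. The correct statement therefore needs the hypothesis that $|S|=1$ or that the actions in $S$ are not all payoff-identical ($g_A$ not constant on $S$). Under that hypothesis your argument is a complete proof. Rather than reinterpreting $d_{|S|,m}$ as a maximum over suitable games, state this hypothesis explicitly.
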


\begin{proof}
    For each $b \in S$, the vector $p_b$ assigns probability only to actions in
    $S$. Since $\sum_{a\in A} p_b(a) = 1$, the vector $p_b$ contains $|S|-1$
    free variables. Across all $b \in S$, this yields $|S|(|S|-1)$ free
    variables. Further, Condition~3 reduces the number of free variables by at
    least $|S|-1$, leaving
    \begin{equation}
        |S|(|S|-1)- (|S|-1) = (|S|-1)^2.
    \end{equation}
    For each $b \notin S$ we obtain another $m-1$ degrees of freedom, at most.
    Therefore, the total number of independent variables is
    \begin{equation}
        (|S|-1)^2 + (m-|S|)(m-1).
    \end{equation}
    We note that this is an upper bound since some inequalities in Condition~4
    might never hold for specific games. Then we receive less degrees of
    freedom.
\end{proof}

\noindent
Interestingly, the bound is symmetric in $|S|$ around $(m+1)/2$, implying that
classes supported on very small subsets and classes supported on very large
subsets have comparable dimensions. In particular, $d_{1,m}=d_{m,m}=(m-1)^2$.
Thus, the classes corresponding to partner-like equilibria and equalizers are
among the largest equilibrium classes.
Figure~\ref{fig::degrees_of_freedom_ten_actions} illustrates that equilibrium
classes of intermediate support are considerably more constrained than those
near the extremes. In this sense, equalizers ($|S|=|A|$) and single-action
equilibria ($|S|=1$) occupy the largest regions of the equilibrium space.

%%%%%%%%%%%%%%%%%%%%%%%
%% THREE ACTION DONATION GAME %%
%%%%%%%%%%%%%%%%%%%%%%%

\subsection{Three action donation game}
To illustrate Theorem~\ref{Theo::Better_Characterisation_of_Nash}, we consider a
three-action donation game in which players can choose between cooperation~($C$), moderation~($M$),
and defection~($D$). Cooperation provides a benefit~$b_1$ to the opponent at a
cost~$c_1$. Moderation provides a smaller benefit~$b_2$ at a smaller cost~$c_2$. 
Defection neither comes with a cost nor with a benefit. 
The resulting payoff matrix is
\begin{equation}
    \begin{gathered}
    \quad C \quad  \quad M  \quad \quad D
     \\
    \begin{matrix}
        C \\ M \\D
    \end{matrix}
    \begin{pmatrix}
        b_1-c_1 & b_2-c_1 &-c_1\\
        b_1-c_2 & b_2-c_2 & -c_2 \\
        b_1 & b_2 & 0
    \end{pmatrix}.
    \end{gathered}
\end{equation}
Applying Theorem~\ref{Theo::Better_Characterisation_of_Nash} yields the
equilibrium classes listed in Table~\ref{tab:three_action_classes}.

Now that we have characterised the equilibrium conditions and the corresponding
classes, a natural question arises. Since many equilibrium classes exist, which
ones are actually adopted in a social learning process?

To address this question, we consider a pairwise comparison process~\cite{traulsen:PRE:2006b}. 
This process considers a population of individuals with different strategies. 
Individuals receive payoffs by interacting in repeated games with randomly chosen opponents drawn from the population. 
Occasionally, they may revise their strategy by either adopting a random new strategy (akin to a mutation), or by imitating a strategy of another population member (akin to selection). 
A common assumption is that strategies with larger payoffs are more likely to be imitated~\cite{szabo:PRE:1998}, leading to the propagation of well-performing strategies within the population. 

To implement these dynamics, we use the process by Imhof and Nowak~\cite{imhof:PRSB:2010}.
This process assumes that mutations are rare, such that most of the time, everyone in the population adopts the same strategy~\cite{fudenberg:JET:2006,wu:JMB:2012,mcavoy:jet:2015}.  
Once a mutant strategy arises, it either fixes in the population or goes extinct before the next mutant arises. 
We consider this process in a population of $100$ individuals, initially all using the same resident strategy. 
At each of $10^8$ iteration steps, one individual then switches to a mutant strategy drawn uniformly at random
from the set of reactive strategies. 
Let $\pi_{M,j}$ denote the mean payoff of the mutant strategy in a population consisting of $j$ mutants and $100-j$
residents. Similarly, let $\pi_{R,j}$ denote the corresponding payoff of the resident
strategy. Then the fixation probability (the probability that the mutant
takes over the entire population) can be computed explicitly~\cite{nowak:Nature:2004}. 
It is defined by
\begin{align}
    \phi =
    \left(
    1+\sum_{i=1}^{100}
    \prod_{j=1}^{i}
    \exp\left(-\beta(\pi_{M,j}-\pi_{R,j})\right)
    \right)^{-1}.
\end{align}
Here, $\beta\!\ge\!0$ is the selection strength parameter. It determines how
strongly payoff differences affect the fixation probability. We record the
resident strategy throughout the evolutionary process and classify it according
to the equilibrium classes defined in
Theorem~\ref{Theo::Better_Characterisation_of_Nash}. Because each equilibrium
class is lower-dimensional than the full strategy space, the probability of
sampling a Nash equilibrium exactly is zero. To classify resident strategies
into one of the equilibrium classes, we therefore relax the equilibrium
conditions slightly.
First, we allow probabilities of up to $0.05$ on actions outside the support.
That is, we require
\begin{equation}
    p_b(a) < 0.05
\end{equation}
for all $a \in A\setminus S$ and $b \in S$.
Second, we allow small differences in payoffs against unconditional strategies.
Specifically, we require
\begin{equation}
    |\pi^i(\mathrm{All}{a_1},p)-\pi^i(\mathrm{All}{a_2},p)| < 0.2
\end{equation}
for all $a_1,a_2\in S$.

Figure~\ref{fig::evolutionary_sims}\textbf{(a)} shows the abundances of the
equilibrium classes as $b_1$ varies over the interval $[2.2,4.0]$, with
$b_2\!=\!2$, $c_1\!=\!1$, and $c_2\!=\!0.5$ fixed. We see that only a subset of the
equilibrium classes is observed frequently. For small values of $b_1$,
$M$-supporting equilibria are the most abundant. As $b_1$ increases, they are
replaced by $C$-supporting equilibria, which are more commonly known as partner
strategies~\citep{Hilbe:partners:2015}. The remaining equilibrium classes are
observed only rarely.

To understand these abundance patterns, we consider two possibilities. First, some
equilibrium classes may be sampled more frequently than others simply because
they occupy a larger volume of the strategy space. Second, equilibria may differ in
their robustness against mutant invasions. We therefore analyze both the dimensions of
the equilibrium classes and their invasibility in turn. 

We begin with the dimensions. Applying
Proposition~\ref{Prop::Degrees_of_freedom} to the three-action donation game
yields an upper bound of $4$ degrees of freedom for equilibrium classes with
$|S|\!=\!1$ and $|S|\!=\!3$, and an upper bound of $3$ degrees of freedom for classes
with $|S|\!=\!2$ (for comparison, the full reactive strategy space is
$6$-dimensional). Thus, equilibrium classes supported on two actions occupy
a smaller volume of the strategy space. 
The lower abundance of equilibrium classes with $|S|\!=\!2$ may already be
explained by this dimensional effect: Even if some of these equilibria are
robust against invasion, their smaller dimensionality makes them much less
likely to arise through mutation than equilibria belonging to higher-dimensional
classes.

To investigate the second mechanism, we analyze invasibility directly.
Figure~\ref{fig::evolutionary_sims}\textbf{(b)} shows the invasion probabilities of
randomly generated equilibria. For $b_1\!=\!3$ and $b_1\!=\!3.5$, we generate $100$
equilibria from each class satisfying the exact equilibrium conditions. Each
equilibrium is then challenged by $10^5$ random mutants drawn uniformly from the
full set of reactive strategies. We record how often an equilibrium from
each class is invaded.
For $b_1\!=\!3$, $M$-supporting equilibria are invaded least often. This is
consistent with their high abundance in the evolutionary simulations. For
$b_1\!=\!3.5$, the situation changes. $C$-supporting equilibria are invaded at a
similar rate to $M$-supporting equilibria, yet they are substantially more
abundant. This suggests that differences in invasibility alone cannot explain
the observed frequencies.

Figure~\ref{fig::evolutionary_sims}\textbf{(c)} provides additional insight. It shows the
probability that a randomly generated strategy satisfies the third condition of
Theorem~\ref{Theo::Better_Characterisation_of_Nash}, given that it satisfies the first two. For both values of $b_1$,
$C$-supporting equilibria are generated more easily than the competing classes.
Thus, their high abundance can be attributed both to their large dimension and
to the relatively weak restrictions imposed by the equilibrium conditions.

Finally, equalizers, corresponding to the class
$\{C,M,D\}$, have the same number of degrees of freedom as the
single-action equilibrium classes. Nevertheless, they are invaded more often
than any other class. Their high invasibility therefore appears to prevent them
from becoming common residents despite their comparatively large dimension.

Taken together, the simulations suggest that the evolutionary abundance of equilibrium
classes is determined by a combination of two effects: how easily equilibria
from a class can be generated through mutation and how robust they are against
invasion once they arise.

\section{Discussion}
In this paper, we characterised the set of reactive Nash equilibria in repeated
additive games with an arbitrary finite number of actions. We showed that the
equilibrium conditions can be expressed as a finite system of linear equalities
and inequalities in the strategy parameters. In addition, we derived a
representation of a reactive strategy's self-payoff that avoids explicit payoff
calculations and allows the equilibrium conditions to be obtained in a
tractable form.

Repeated games admit a rich set of equilibrium outcomes, as highlighted by the
Folk theorem and its variants~\citep{Fudenberg:Game:1991,mailath:book:2006}.
However, obtaining explicit descriptions of equilibrium behaviour remains
challenging. As a result, previous work has often focused on specific strategies
or on particular equilibrium
classes~\citep{Do:JTB:2017,Hilbe:PNAS:2017,Murase:SR:2020,Ueda:RSOS:2021,Ueda:ORF:2022,Li:NatCompSci:2022,Glynatsi:PNAS:2024}.
Moreover, much of this work has been restricted to games with only two actions.
Our results complement these approaches by providing an explicit
characterisation of the set of reactive Nash equilibria in repeated additive
games with an arbitrary finite number of actions.

A central finding is that the equilibrium set is naturally partitioned according
to the actions played at equilibrium. Specifically, there is a
one-to-one correspondence between equilibrium classes and the non-empty subsets
of the action set. An equilibrium belongs to the class associated with a subset
$S$ if, when facing itself, the strategy uses exactly the actions in $S$. We
call such an equilibrium $S$-supporting. Moreover, if an equilibrium is
$S$-supporting, then any opponent that uses only actions in $S$ against the
equilibrium strategy is also a best response, irrespective of how many previous
rounds the opponent remembers or how frequently the different actions in $S$ are
played. In the special case $S=A$, we recover the class of equalizer strategies,
for which every opponent obtains the same
payoff~\citep{boerlijst:AMM:1997,Press:PNAS:2012,Hilbe:equalizers:2013}. Thus,
equalizer strategies arise naturally as the equilibria with support on the whole
action set within our general characterisation.

The usefulness of our characterisation becomes apparent in the three-action
donation game. The equilibrium conditions remain simple and interpretable even
when additional actions are introduced (Table~\ref{tab:three_action_classes}).
Importantly, the derivation does not depend on the particular interpretation of
the third action. For example, if the benefit $b_2$ of the third action $M$ in
Table~\ref{tab:three_action_classes} is negative, the game can be interpreted as
a donation game with costly punishment. In this case, the conditions for
$C$-supporting equilibria become less restrictive, whereas $M$-supporting
equilibria cannot exist because condition (c) cannot be satisfied. Thus, our
conditions immediately imply that punishment-supporting equilibria cannot exist,
while cooperation-supporting equilibria become easier to satisfy. In this way,
our framework recovers previously studied settings as special
cases~\citep{Ohtsuki:Punishment:2009} while extending them to arbitrary additive
games with finitely many actions.

Once the equilibrium classes are characterised, a natural question is which of
them are likely to arise under evolutionary or learning dynamics. The number of
degrees of freedom of an equilibrium class provides one indication, since
classes with fewer degrees of freedom are generally less likely to be reached
by mutation or exploration. However, the prevalence of a class also depends on
its robustness to invasion. These two effects need not align. For example, in
our simulations of the three-action donation game, equilibria supported on the
complete action set (equalizers) appear to be more susceptible to invasion,
whereas equilibria supported on a single action are the most robust, despite
having the same number of degrees of freedom. Our results
(Figure~\ref{fig::evolutionary_sims}) suggest that differences in the abundance
of equilibrium classes are shaped by both how easily they are reached and how
robust they are to invasion.

On a technical level, an important contribution of this paper is the
representation of a reactive strategy's self-payoff in additive games. This
representation allows equilibrium conditions to be derived without explicitly
computing self-payoffs and is ultimately responsible for the tractability of
the problem. Beyond the present work, it would be interesting to understand
whether analogous payoff representations exist for richer strategy classes or
for non-additive games.

Although additive games do not encompass all repeated interactions, they include
many important models of social and biological behaviour, such as the donation
game~\citep{sigmund:book:2010} and games with
punishment~\citep{Ohtsuki:Punishment:2009} or reward~\citep{Fang:MPES:2019}.
Our results show that, within the class of additive games and reactive
strategies, the equilibrium set exhibits a surprisingly simple structure.

\section*{Data accessibility}
The source code used to reproduce the results of the evolutionary simulations
and generate the figures in this manuscript is available in the following
GitHub repository: \url{https://github.com/FranziLesi/reactive-nash-in-additive-games}.

\section*{Authors' contributions}
All authors designed research, performed research and wrote the paper.

\section*{Acknowledgements}
F.L. acknowledges support from the international internship program at RIKEN
R-CCS, where part of this study was conducted. C.H. acknowledge generous support
by the European Research Council Starting Grant 850529: E-DIRECT.

\section*{Artificial Intelligence (AI) declaration}
We used ChatGPT (OpenAI) to assist with proofreading and language editing of the
manuscript. All scientific content, analyses, interpretations, and conclusions
were generated and verified by the authors.

\section*{Appendix}\label{sec::Appendix}

We first establish an auxiliary result, which generalizes \emph{Akin's Lemma}~\cite{Akin:Ergodic:2016,Hilbe:partners:2015}.

\begin{Lemma}[Multi-action Akin]\label{lem::Akin}
    Let $p$ be a memory-1 strategy. Let the opponent be arbitrary. For $a,b \in A \times B$ let $\delta_{ab} : A \rightarrow \set{0,1}$ be defined by $\delta_{ab}(a_0) = 1 \Leftrightarrow a_0=a$. Then for all $a_0 \in A$
    \begin{equation}
        \sum_{(a,b) \in A\times B} \left(p_{ab}(a_0) - \delta_{ab}(a_0)\right) v(a,b) = 0
    \end{equation}
\end{Lemma}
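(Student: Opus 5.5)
We will prove the multi-action Akin identity by a telescoping argument for the marginal of the focal player's actions. Fix $a_0\in A$ and write $v_t(a,b)$ for the probability that the outcome of round $t$ is $(a,b)$; $v$ is the limit of the Cesàro averages $\frac{1}{\tau}\sum_{t=1}^{\tau} v_t$. This is how $v$ is defined for memory-$1$ pairs. For a general opponent we assume, as in Lemma~\ref{lem::finding_unconditional}, that these averages converge (or pass to a convergent subsequence). Define
\[
x_t \coloneqq \sum_{(a,b)\in A\times B}\delta_{ab}(a_0)\,v_t(a,b)=\Pr[\text{focal player plays } a_0 \text{ in round } t].
\]

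\textbf{Step 1: one-step update.} Since $p$ is memory-$1$, the focal player's round-$(t+1)$ action depends only on the round-$t$ outcome, whatever the opponent does. Conditioning on that outcome gives
\[
x_{t+1}=\sum_{(a,b)} p_{ab}(a_0)\,v_t(a,b).
\]
The only point to check here is that the opponent's possibly long memory does not enter. It does not, because we only need the marginal of the focal player's next action. The joint law of the next state would depend on the opponent, but that marginal depends on the current outcome through $p$ alone.

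\textbf{Step 2: telescoping.} Subtracting the two expressions gives
\[
\sum_{(a,b)}\bigl(p_{ab}(a_0)-\delta_{ab}(a_0)\bigr)v_t(a,b)=x_{t+1}-x_t .
\]
Averaging over $t=1,\dots,\tau$ turns the right-hand side into $(x_{\tau+1}-x_1)/\tau$. Since $0\le x_t\le 1$, this has absolute value at most $1/\tau$ and so tends to $0$.

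\textbf{Step 3: pass to the limit.} The left-hand side is a fixed linear functional of $\frac1\tau\sum_{t\le\tau} v_t$, which converges to $v$. Taking $\tau\to\infty$ therefore yields the claimed identity.

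The only delicate step is making sure $v$ is indeed the Cesàro limit of the $v_t$ for an arbitrary opponent. If the limit is not guaranteed to exist, the argument still works along any convergent subsequence, because the telescoped term vanishes uniformly in $\tau$.
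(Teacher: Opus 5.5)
Your proposal is correct and follows essentially the same argument as the paper. The paper also writes the probability of playing $a_0$ in rounds $t$ and $t+1$ as the two linear functionals of $v_t$, telescopes their difference over the Cesàro average so that it vanishes like $1/\tau$, and identifies the same average with the functional applied to $v$. Your remark about passing to convergent subsequences for an arbitrary opponent is a harmless refinement that the paper leaves implicit.
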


\begin{proof}
    Let $p_t(a_0)$ be the probability that $p$ plays action $a_0$ in round $t$. Therefore 
    \begin{equation}
    p_t(a_0) = \sum_{(a,b) \in A\times B} \delta_{ab}(a_0) v_t(a,b)
    \end{equation} 
    and 
    \begin{equation}
        p_{t+1}(a_0)= \sum_{(a,b) \in A\times B} p_{ab}(a_0) v_t(a,b).
    \end{equation}
     We define $w(t) \coloneqq p_{t+1}(a_0) -p_{t}(a_0)$ and obtain that
    \begin{equation}
        w(t) = \sum_{(a,b) \in A\times B} \left(p_{ab}(a_0) - \delta_{ab}(a_0)\right) v_t(a,b).
    \end{equation}
    Further 
    \begin{align}
        \begin{aligned}
        \frac{1}{\tau +1} \sum_{t=0}^{\tau} w(t) &= \frac{1}{\tau +1}  \sum_{t=0}^{\tau} p_{t+1}(a_0) -p_{t}(a_0)\\
        &= \frac{1}{\tau +1} \left(\underbrace{p_{\tau+1}(a_0)}_{|.|\leq 1} -\underbrace{p_{0}(a_0)}_{|.|\leq 1}\right) \overset{\tau \rightarrow \infty}{\rightarrow} 0.
        \end{aligned}
    \end{align}
    On the other hand 
    \begin{align}
        \begin{aligned}
             \frac{1}{\tau +1} \sum_{t=0}^{\tau} w(t) &= \frac{1}{\tau +1} \sum_{t=0}^{\tau} \sum_{(a,b) \in A\times B} \left(p_{ab}(a_0) - \delta_{ab}(a_0)\right) v_t(a,b) \\
             &= \sum_{(a,b) \in A\times B} \left(p_{ab}(a_0) - \delta_{ab}(a_0)\right) \underbrace{\frac{1}{\tau +1} \sum_{t=0}^{\tau}  v_t(a,b)}_{\overset{\tau \rightarrow \infty}{\rightarrow} v(a,b)} \\
             &\overset{\tau \rightarrow \infty}{\rightarrow} \sum_{(a,b) \in A\times B} \left(p_{ab}(a_0) - \delta_{ab}(a_0)\right) v(a,b).
        \end{aligned}
    \end{align}
\end{proof}

This allows us to prove all lemmas from Section~\ref{sec:nash_equilibria}.

\begin{proof} [Proof of Lemma~\ref{lem::finding_unconditional}]
    Since both strategies are memory-$n$ strategies for some $n \in \N$ and we assume fixed initial actions, there exists a unique stationary distribution $w$ of the game. The states are the possible histories the strategies with the larger memory observes. We obtain the distribution of action frequencies $v$ then by summing together states that share the same last move. 
    
    \begin{align}
        v(a,b) = \sum_{(a_1b_1, \dots, a_{n-1}b_{n-1})\in (A\times B)^{n-1}} w((a_1b_1, \dots, a_{n-1}b_{n-1},ab))
    \end{align}
    That is, we group together all histories where the latest observed action pair is the same. The resulting distribution captions how frequently each action pair was played.

    Consider the action frequencies $v$ of the interaction of two arbitrary strategies $\sigma_1$ and $\sigma_2$. Their long-term payoffs in an additive game take the form 
    \begin{align}\label{eq::stationary_split_additive}
        \begin{aligned}
        (\pi^i(\sigma_1,\sigma_2), \pi^{-i}(\sigma_1, \sigma_2))^T &= \sum_{(a,b) \in A\times B} v(a,b) g(a,b)\\
         &= \sum_{(a,b) \in A \times B} v(a,b) \left(g_A(a) + g_B(b)\right) \\
        &= \sum_{a \in A} g_A(a) \left(\sum_{b \in B} v(a,b)\right) + \sum_{b \in B} g_B(b) \left(\sum_{a \in A} v(a,b)\right).
        \end{aligned}
    \end{align}
    It follows that given two different games with action frequencies $v_1$ and $v_2$, the long-term payoffs of both players are equal as long as for all $a \in A$ 
    \begin{equation}\label{eq::averaging_a}
        \sum_{b \in B} v_1(a,b) = \sum_{b \in B} v_2(a,b)
    \end{equation}
    and for all $b \in B$
    \begin{equation}\label{eq::averaging_b}
        \sum_{a \in A} v_1(a,b) = \sum_{a \in A} v_2(a,b).
    \end{equation}
    Denote by $\tilde{v}$ the action frequencies of $p$ playing against the unconditional strategy $q$. It remains to prove that Equations~\eqref{eq::averaging_a} and~\eqref{eq::averaging_b} hold for $v_1 = v$ and $v_2 = \tilde{v}$. 

    Consider $b_0 \in B$ arbitrary but fixed. We obtain from Lemma~\ref{lem::Akin}
    \begin{align}
        \begin{aligned}
        \sum_{(a,b) \in A\times B} q_{(a,b)}(b_0)\tilde{v}(a,b) &=  \sum_{(a,b) \in A\times B} \delta_{ab}(b_0) \tilde{v}(a,b)\\
        \sum_{(a,b) \in A\times B} q(b_0)\tilde{v}(a,b) &=  \sum_{a \in A} \tilde{v}(a,b_0)\\
        q(b_0) \underbrace{\sum_{(a,b) \in A\times B} \tilde{v}(a,b)}_{=1} &=  \sum_{a \in A} \tilde{v}(a,b_0)\\
        q(b_0) &=  \sum_{a \in A} \tilde{v}(a,b_0).
        \end{aligned}
    \end{align}
    Since $q(b_0) \coloneqq \sum_{a \in A} v(a,b_0)$ by definition, we obtain Equation~\eqref{eq::averaging_b}.
    
    Consider arbitrary but fixed $a_0 \in A$. Again, it follows from Lemma~\ref{lem::Akin} for $p$ playing against $q$ that
    \begin{align}
        \sum_{(a,b) \in A\times B} p_{ab}(a_0)\tilde{v}(a,b) =  \sum_{(a,b) \in A\times B} \delta_{ab}(a_0) \tilde{v}(a,b).
    \end{align}
    Since $p$ is reactive we obtain that
    \begin{equation}\label{eq::comparing_averaging_1}
            \sum_{b\in B} p_{b}(a_0) \sum_{a \in A} \tilde{v}(a,b) =  \sum_{b \in B} \tilde{v}(a_0,b).
    \end{equation}
    Similarly it follows for $p$ playing against $\sigma$ that
    \begin{equation}\label{eq::comparing_averaging_2}
            \sum_{b\in B} p_{b}(a_0) \sum_{a \in A} v(a,b) =  \sum_{b \in B} v(a_0,b)
    \end{equation}

    Due to Equation~\eqref{eq::averaging_b} the right-hand sides of Equations~\eqref{eq::comparing_averaging_1} and~\eqref{eq::comparing_averaging_2} must be equal. Hence, their left-hand sides must also coincide, which yields $\sum_{b \in B} \tilde{v}(a_0,b)=\sum_{b \in B} v(a_0,b)$. Since $a_0 \in A$ was arbitrary, Equation~\eqref{eq::averaging_a} follows.
\end{proof}

\begin{proof} [Proof of Lemma~\ref{lem::unconditional_payoff}]
    We first consider $\pi(p,q) = (\pi^i(p,q), \pi^{-i}(p,q))^T$. Since $q$ is unconditional it is also a self-reactive strategy. We thus obtain the following payoff representation, see~\cite{Lesigang:EconLett:2025},
    \begin{equation}
        \begin{aligned}
            \pi(p,q) = 
            \sum_{b \in B} q(b) \sum_{\substack{a \in A \\ \tilde{b} \in B}} g(a,\tilde{b}) p_b(a) q(\tilde{b}).
        \end{aligned}
    \end{equation}

    It follows from the game being additive that

    \begin{align}\label{eq:convex_comb_left}
        \begin{aligned}
            \pi(p,q) &= 
            \sum_{b \in B} q(b) \sum_{\substack{a \in A \\ \tilde{b} \in B}} \left(g_A(a) + g_B(\tilde{b})\right) p_b(a) q(\tilde{b}) \\
            &= \sum_{b \in B} q(b) \left(\sum_{a \in A }g_A(a) p_b(a) \underbrace{\sum_{\tilde{b} \in B}q(\tilde{b})}_{=1} + \sum_{\tilde{b} \in B} g_B(\tilde{b}) q(\tilde{b}) \underbrace{\sum_{a \in A}p_b(a)}_{=1}\right)\\
            &= \sum_{b \in B} q(b) \left(\sum_{a \in A }g_A(a) p_b(a) + \sum_{\tilde{b} \in B} g_B(\tilde{b}) q(\tilde{b}) \right)\\
            &= \left(\sum_{b \in B} q(b) \sum_{a \in A }g_A(a) p_b(a)\right) + \sum_{\tilde{b} \in B} g_B(\tilde{b}) q(\tilde{b})\underbrace{\sum_{b \in B} q(b)}_{=1} \\
            &= \left(\sum_{b \in B} q(b) \sum_{a \in A }g_A(a) p_b(a)\right) + \sum_{b \in B} g_B(b) q(b).
        \end{aligned}
    \end{align}
    For an arbitrary but fixed action $b_0$ it holds that
    \begin{equation}
    \pi(p,Allb_0)
         =
        \sum_{a \in A} p_{b_0} (a) \left(g_A(a) + g_B(b_0)\right).
    \end{equation}
    Thus the right hand side of~\eqref{eq::convex_comb} becomes
    \begin{align}\label{eq::convex_comb_right}
        \begin{aligned}
            \sum_{b\in B} q(b) \pi(p,Allb) &= 
            \sum_{b\in B} q(b) \sum_{a \in A} p_{b} (a) \left(g_A(a) + g_B(b)\right) \\
            &= \left(\sum_{b\in B} q(b) \sum_{a \in A} p_{b} (a) g_A(a)\right) + \sum_{b\in B} q(b) g_B(b) \underbrace{\sum_{a \in A} p_{b} (a)}_{=1}\\
            &= \left(\sum_{b \in B} q(b) \sum_{a \in A }g_A(a) p_b(a)\right) + \sum_{b \in B} g_B(b) q(b).
        \end{aligned}
    \end{align}
    This proves the theorem.
\end{proof}

\begin{proof}[Proof of Lemma~\ref{Lem::Characterisation_of_Nash}]
    $(\Leftarrow)$ Let $p$ be a reactive strategy that fulfills all four conditions. Consider $\pi^i(p,p)$ as obtained in Corollary~\ref{Kor::self-payoff}. Together with Condition~1 and Condition~2 it holds that
    \begin{align}
        \begin{aligned}
        \pi^i(p,p) &= \sum_{a \in A} \left(\sum_{\tilde{a} \in A} v(a, \tilde{a})\right) \pi^i(Alla,p) \\
        &= \sum_{s \in S} \left(\sum_{\tilde{s} \in S} v(s, \tilde{s})\right) \pi^i(Alls,p).
        \end{aligned}
    \end{align}
    Let $a_1$ be any arbitrary element of $S$. It follows from Condition~3 that
    \begin{align}
        \begin{aligned}
            \pi^i(p,p) &= \pi^i(Alla_1,p)\sum_{s \in S} \left(\sum_{\tilde{s} \in S} v(s, \tilde{s})\right) \\
            &= \pi^i(Alla_1,p).
        \end{aligned}
    \end{align}
    For any $a_2 \in A\setminus S$ Condition~4 implies that $\pi^i(p,p) \geq \pi^i(Alla_2,p)$. Since~\cite{Lesigang:EconLett:2025} shows that a best response exists among the pure unconditional strategies, we obtain that $p$ is a Nash Equilibrium.
    
    $(\Rightarrow)$ Define $S \coloneqq \argmax_{a \in A} \pi^i(Alla, p)$. Since $p$ is a Nash equilibrium it holds that 
    \begin{align}
        \pi^i(p,p) = \pi^i(Alla_1,p) \text{ for all } a_1 \in S
    \end{align}
    and by the definition of $S$ that
    \begin{align}
        \pi^i(p,p) > \pi^i(Alla_2,p) \text{ for all } a_2 \in A\setminus S
    \end{align}

    We consider again the self-payoff as established in Corollary~\ref{Kor::self-payoff}
    \begin{align}
        \pi^i(p,p) = \sum_{a \in A} \left(\sum_{\tilde{a} \in A} v(a, \tilde{a})\right) \pi^i(Alla,p).
    \end{align}
    Note that $\pi^i(p,p)$ is a convex combination over all constant strategies. Since $p$'s payoff matches the highest value among the constant strategies, all coefficients of constant strategies that do not achieve the highest value must be $0$, i.e.,
    \begin{align}
        \sum_{\tilde{a} \in A} v(a, \tilde{a}) = 0 \text{ for all } a \in A\setminus S.
    \end{align}
    Because $v$ is a distribution all components of $v$ are non negative. Thus $v(a,\tilde{a}) = 0$ for all $\tilde{a} \in A$. Due to symmetry of $v$ we obtain Condition~$1$. Condition~2 ensures that the stationary distribution is unique.
\end{proof}

\begin{proof} [Proof of Lemma~\ref{lem::Adopt_condition_3}]
    $(\Rightarrow)$ Let $a_1 \in S$ and $a_2 \in A\setminus S$ be arbitrary but fixed. From Lemma~\ref{lem::Akin} it follows that
    \begin{align}
        \begin{aligned}
            \sum_{\tilde{a} \in A} p_{\tilde{a}}(a_2)\sum_{a \in A}v(a,\tilde{a}) = \sum_{a \in A} v(a_2, \tilde{a})
        \end{aligned}
    \end{align}
    By assumption this reduces to
    \begin{align}
    \begin{aligned}
        \sum_{\tilde{a} \in S} p_{\tilde{a}}(a_2)\sum_{a \in S}v(a,\tilde{a}) = 0
    \end{aligned}
    \end{align}
    Since all terms in the left hand side are non negative we obtain for arbitrary $a_1 \in S$
    \begin{align}
        p_{a_1}(a_2) \sum_{a \in S}v(a,a_1) = 0
    \end{align}
    We assume that $\sum_{a \in S}v(a,a_1)=0$. Then $S\setminus \set{a_1}$ would also fulfill all four conditions. This is a contradiction to $S$ being minimal. Thus $p_{a_1}(a_2) = 0$.

    $(\Leftarrow)$ For the other direction we consider the Markov chain graph.
    We group the states by $S\times S$ as well as $(A\times A)\setminus (S\times
    S)$. Consider arbitrary states $(a_1,a_2) \in S\times S$ and $(b_1,b_2) \in
    (A\times A)\setminus (S\times S)$. The transition probability between the
    two states is by definition $p_{a_1}(b_2)p_{a_2}(b_1)$. Because $(b_1,b_2)
    \in (A\times A)\setminus (S\times S)$ either $b_1 \notin S$ or $b_2 \notin
    S$ or both. Thus $p_{a_1}(b_2)p_{a_2}(b_1)=0$. Since we either assume a
    unique stationary distribution or that $p$'s initial move ensures that $p$
    eventually plays actions in $S$ we obtain that no state $(b_1,b_2)\in
    (A\times A)\setminus (S\times S)$ can be reached in the long run. Thus the
    closed class of the Markov chain is a subset of $S\times S$ which is
    equivalent to $v(b_1,b_2) = 0$. This proves the Lemma. 
\end{proof}

\bibliographystyle{naturemag}

\bibliography{bibliography}

\newpage

\subsection*{Figures and tables}

~\\[1cm]
\noindent
\begin{table}[htbp!]
\centering
\renewcommand{\arraystretch}{1.4}
\begin{tabular}{c l}
\hline
$S$ & Conditions\\\hline
$\{C\}$ &
$\begin{array}[c]{l}
\text{(a)} \; p_C=(1,0,0)\\
\text{(b)} \; b_1p_M(C)+b_2p_M(M)-c_2 \le b_1-c_1\\
\text{(c)} \; b_1p_D(C)+b_2p_D(M)\le b_1-c_1
\end{array}$
\\
\hline

$\{M\}$ &
$\begin{array}[c]{l}
\text{(a)} \; p_M=(0,1,0)\\
\text{(b)} \; b_1p_C(C)+b_2p_C(M)-c_1 \le b_2-c_2\\
\text{(c)} \; b_1p_D(C)+b_2p_D(M)\le b_2-c_2
\end{array}$
\\
\hline

$\{D\}$ &
$\begin{array}[c]{l}
\text{(a)} \; p_D=(0,0,1)\\
\text{(b)} \; b_1p_C(C)+b_2p_C(M)-c_1 \le 0\\
\text{(c)} \; b_1p_M(C)+b_2p_M(M)-c_2 \le 0
\end{array}$
\\\hline
$\{C,M\}$ &
$\begin{array}[c]{l}
\text{(a)} \; p_C=(p_C(C),1-p_C(C),0)\\
\text{(b)} \; p_M=\left(p_C(C)-\frac{c_1-c_2}{b_1-b_2},
1-p_C(C)+\frac{c_1-c_2}{b_1-b_2},0\right)\\
\text{(c)} \; b_1p_D(C)+b_2p_D(M)+p_C(C)(b_2-b_1)\le b_2-c_1
\end{array}$
\\\hline
$\{C,D\}$ &
$\begin{array}[c]{l}
\text{(a)} \; p_C=(p_C(C),0,1-p_C(C))\\
\text{(b)} \; p_D=\left(p_C(C)-\frac{c_1}{b_1},
0,1-p_C(C)+\frac{c_1}{b_1}\right)\\
\text{(c)} \; b_1p_M(C)+b_2p_M(M)-p_C(C)b_1\le -c_1+c_2
\end{array}$
\\\hline
$\{M,D\}$ &
$\begin{array}[c]{l}
\text{(a)} \; p_M=(0,p_M(M),1-p_M(M))\\
\text{(b)} \; p_D=\left(0,p_M(M)-\frac{c_2}{b_2},
1-p_M(M)+\frac{c_2}{b_2}\right)\\
\text{(c)} \; b_1p_C(C)+b_2p_C(M)-p_M(M)b_2\le -c_2+c_1
\end{array}$
\\\hline
$\{C,M,D\}$ &
$\begin{array}[c]{l}
\text{(a)} \; p_C=(p_C(C),p_C(M),1-p_C(C)-p_C(M))\\
\text{(b)} \; b_1p_C(C)+b_2p_C(M)-c_1
=
b_1p_M(C)+b_2p_M(M)-c_2\\
\text{(c)} \; b_1p_C(C)+b_2p_C(M)-c_1
=
b_1p_D(C)+b_2p_D(M)
\end{array}$\\\hline
\end{tabular}
\caption{\textbf{An overview of the $S$-supporting equilibria in the
3-action donation game.}
For each subset $S$ of actions, we depict the conditions required for a strategy of the respective strategy class to be an equilibrium, as given by Theorem~\ref{Theo::Better_Characterisation_of_Nash}.
}
\label{tab:three_action_classes}
\end{table}

%%%%%%% Figures %%%%%%%

\begin{figure}[htbp!]
    \centering
    \includegraphics[width=0.75\textwidth]{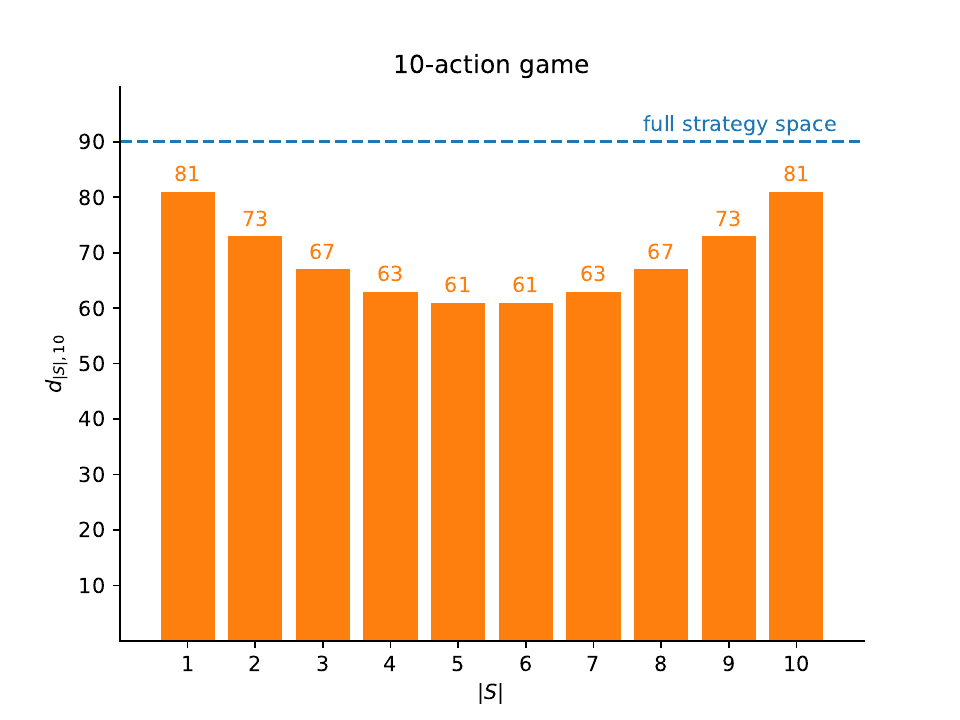}
    \caption{\textbf{Upper bound on the degrees of freedom of $S$-supporting equilibrium
    classes in a $10$-action game}. For each subset size $|S|$, we plot the upper
    bound from Proposition~\ref{Prop::Degrees_of_freedom}. The dimensions are
    symmetric around $|S|=(|A|+1)/2$, with the largest values attained at $|S|=1$
    and $|S|=|A|$. The dashed (blue) line indicates the dimension of the full reactive
    strategy space.}
    \label{fig::degrees_of_freedom_ten_actions}
\end{figure}

\begin{figure}
    \centering
    \includegraphics[width=\textwidth]{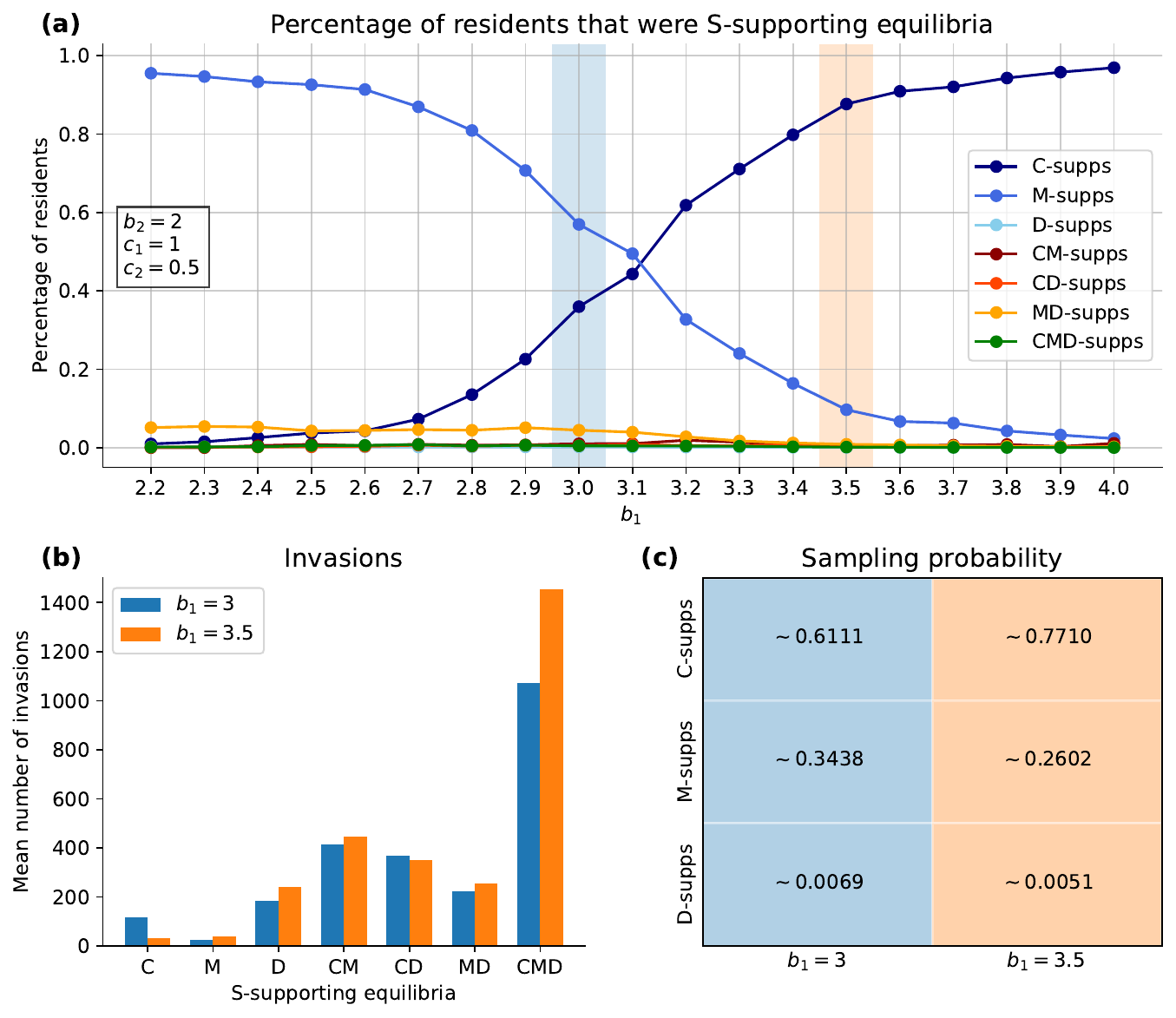}
    \caption{\textbf{Evolutionary dynamics of reactive strategies.}
    \textbf{(a),} We run twenty
    independent simulations and record the percentage of residents being
    $S$-supporting equilibria. For a strategy to be an $S$-supporting
    equilibrium we relax Condition~$1$ of
    Theorem~\ref{Theo::Better_Characterisation_of_Nash} by $0.05$ and
    Condition~$3$ by $0.2$. We observe that $M$-supporting equilibria are most
    abundant but $C$-supporting equilibria take over for increasing $b_1$.
    \textbf{(b),}
    Invasion stability for $S$-supporting equilibria. We generate $100$ random
    equilibria from each class and $10^5$ random reactive mutants for each
    equilibrium strategy. We track how often mutants invade the equilibrium
    strategy. We plot the average for each class. The high resistance of
    $M$-supporting equilibria and $C$-supporting equilibria could explain their
    abundance in the evolutionary simulations.
    \textbf{(c),} Abundance of $S$-supporting
    equilibria for $|S|=1$. We calculate the probability of a strategy being a
    $C$-supporting equilibrium given that $p_C(C) = 1$ and similarly for
    $M$-supporting and $D$-supporting equilibria. The fact that $D$-supporting
    equilibria are not observed in the evolutionary simulations could be
    explained by their low sampling probability. The simulations are based on
    the three-action donation game with $b_2 = 2$, $c_1 = 1$, and $c_2 = 0.5$ as
    well as $b_1 \in [2.2,4.0]$. The population size is $N =100$ and the
    selection strength is $\beta = 1$. We run our simulations for $10^8$ time
    steps.}
    \label{fig::evolutionary_sims}
\end{figure}

\end{document}